\newcommand{\A}{\mathcal{A}}
\newcommand{\F}{\mathcal{F}}
\newcommand{\R}{\mathbb{R}}
\newtheorem{theorem}{Theorem}
\newtheorem{step}{Step}
\def\A{\boldsymbol{A}}
\def\R{\mathbb{R}}
\def\frac#1#2{{#1\over#2}}
\def\boxit#1{\vbox{\hrule\hbox{\vrule\kern.75truemm
\vbox{\kern.75truemm#1\kern1truemm}\kern1truemm\vrule}\hrule}}
\def\epsilon{\varepsilon}
\def\boxit#1{\vbox{\hrule\hbox{\vrule\kern.75truemm
\vbox{\kern.75truemm#1\kern1truemm}\kern1truemm\vrule}\hrule}}
\numberwithin{equation}{section}
\def\A{{\cal A}}
\newtheorem{lemma}{Lemma}
\newtheorem{example}{Example}
\newtheorem*{ex1cont*}{Example 1 - cont}
\begin{document}

\title{Gain-Loss Hedging and Cumulative Prospect Theory}
\author{Lorenzo Bastianello, Alain Chateauneuf, Bernard Cornet\footnote{Bastianello (corresponding author): Università Ca' Foscari Venezia, (email: lorenzo.bastianello@unive.it); Chateauneuf: IPAG Business School, Universit\'{e} Paris 1 Panth\'{e}on-Sorbonne and Paris School of Economics, (email: chateaun@univ-paris1.fr ); Cornet: Universit\'{e} Paris 1 Panth\'{e}on-Sorbonne and Kansas University (email: cornet@ku.edu). First version: February 2020
}}
\maketitle

\begin{abstract}

Two acts are comonotonic if they yield high payoffs in the same states of nature. The main purpose of this paper is to derive a new characterization of Cumulative Prospect Theory (CPT) through simple properties involving comonotonicity.  The main novelty is a concept dubbed gain-loss hedging: mixing  positive and negative acts creates hedging possibilities even when  acts  are comonotonic. This allows us to clarify in which sense CPT differs from Choquet expected utility. Our analysis is performed under the simpler case of (piece-wise) constant marginal utility which allows us to clearly separate the perception of uncertainty from the evaluation of outcomes.

\medskip \par\noindent
{\sc Keywords:\/} Cumulative Prospect Theory, Comonotonicity, Gain-loss hedging, \v Sipo\v s integral, Choquet integral.

\medskip\par\noindent
{\sc JEL Classification Number:\/} D81.

\end{abstract}

\section{Introduction}\label{sec:intro}

When making everyday decisions, economic agents are often confronted with uncertainty. For instance, one can think of a decision maker (DM) who needs to choose how to allocate her wealth between two different portfolios of assets, or a firm that has to decide whether to invest in an innovative technology or in a traditional one. The most popular model used under risk and uncertainty is the expected utility model. This model, proposed first by Bernoulli at the beginning of the XVIII century,  has been axiomatized by de Finetti \cite{deFinetti}, Savage \cite{Savage} and Von Neumann and Morgenstern \cite{vNM}. However, empirical evidence has shown that expected utility does not provide a good description of DMs' actual choices. Early examples are the famous paradoxes of Allais \cite{Allais} and Ellsberg \cite{Ellsberg}. 

One of the most prominent and most successful alternative to  expected utility theory is cumulative prospect theory (CPT) of Tversky and Kahneman \cite{CPT}. The aim of this paper is twofold: $(i)$ we provide a new mathematical characterization of the CPT functional under the simplifying assumption of (piece-wise) constant marginal utility \textit{à la} Yaari~\cite{Yaari}; $(ii)$ we use the characterization of the previous point to obtain a novel preference axiomatization of CPT.

Consider acts as functions from a state space $S$ to the set of real numbers. Thus, given an act $f:S\rightarrow\R$, $f(s)$ can be interpreted as the amount of money or consumption good that a DM obtains if the state turns out to be $s$. A central role is played by comonotonic acts.  Loosely speaking, two acts are comonotonic if they are positively correlated. Mixing two comonotonic acts does not provide a possible hedge against uncertainty. This idea was exploited in the seminal papers of Schmeidler \cite{Schmeidler86}, \cite{Schmeidler89} to extend expected utility to Choquet expected utility. 

One advantage of CPT over the Choquet model is that  it allows to disentangle the behavior  of DMs in the domain of gains from the one in the domain of losses, i.e. when outcomes are respectively above or below a certain reference point (in our case the reference point is naturally taken equal to 0). This difference in behavior can be decomposed into two components. The first one is called loss-aversion and says that ``losses loom larger than gains'' (Tversky and Kahneman \cite{CPT}). Mathematically, it means that  losses are multiplied by a constant $\lambda>1$. The second one is usually called sign dependence and says that the attitude toward uncertainty (mathematically represented by a capacity) is different for gains and for losses. We take this behavior as a starting point for both the mathematical characterization of CPT and its axiomatization. The intuition behind our properties is that adding comonotonic acts can still provide some hedge if those acts are of opposite signs and have non-disjoint supports. We call this property gain-loss hedging.

We describe here the two main properties that we use  in Section \ref{sec:CPT_math} to characterize mathematically the  CPT functional. The first property is well-known and postulates comonotonic independence (separately) for gains and for losses.  Comonotonic acts do not provide a possible hedge against uncertainty and therefore adding them should not change the preferences of the DM. Take three acts $f,g,$ and $h$, all in the domain of gains or all in the domain of losses, such that $h$ is comonotonic with $f$ and $g$. Then  our condition require that if $f$ and $g$ are indifferent, then adding $h$ to both of them does not change a DM's preferences since in both situations $h$ does not increase nor reduce uncertainty.  
 
The second property, that we call gain-loss hedging, represents the main behavioral novelty of the paper. The key idea is that adding an act above the reference point to an act below the reference point may provide an hedge against uncertainty unless these acts have disjoint supports. To exemplify suppose that there are two states of the world $S=\{s_1,s_2\}$ and that a DM  with a linear utility function over outcomes is indifferent between the assets $f=(20,0)$  ($f$ is the act that pays 20 if $s_1$ is realized and 0 otherwise) and $g=(10,10)$. Consider now the act $h=(0,-10)$ which has disjoint support with $f$ but not with $g$. When the DM evaluates $f+h=(20,-10)$ and $g+h=(10,0)$, she may feel $f+h$ more uncertain than $g+h$ and therefore she may prefer $g+h$. Note that indifference between $f$ and $g$ and then a strict preference for $g+h$ is precluded by the expected  utility model (with the utility function being the identity). More interestingly, this preference pattern would be a paradox even for the more general Choquet expected utility model of Schmeidler \cite{Schmeidler89} (with the utility function being the identity). The Choquet   model excludes  any possible hedging through mixing of comonotonic acts. In this example however act $h$ is comonotonic with both acts $f$ and $g$ and therefore no possible hedging would be envisioned by the  Choquet model. Therefore $h$ is a possible hedge to uncertainty when added to $g$ because gains and losses balance out one another, and not because of comonotonicity. We elaborate more on this idea in Example \ref{ex:gain-loss-hedge}.

In Section \ref{sec:CPT_behave}, we give a preference axiomatization of the CPT model with piece-wise linear utility. We do not assume the Anscombe and Aumann \cite{AA} framework, and our axioms only appeals to simple properties related to comonotonicity. Moreover, we propose a new and simple axiom that can be used to elicit the coefficient of loss-aversion $\lambda$. In order to derive a CPT representation of preferences, we use the mathematical characterization of Section \ref{sec:CPT_math}. In a sense, our paper parallels, in the context of prospect theory, the work of Schmeidler \cite{Schmeidler86}, \cite{Schmeidler89} on the Choquet integral.

Empirical evidence not only supports sign-dependence, but it suggests further that agents are uncertainty averse for gains and uncertainty seeking for losses, see for instance Wakker~\cite{WakkerPT}, Section 12.7 for a review. Section \ref{sec:CPT_ambig_att} provides testable axioms that characterize those opposite behaviors. Finally we investigate when uncertainty aversion for gains is symmetric to uncertainty seeking for losses. Behaviorally, this happens if a DM who is indifferent between an act $f$ and a monetary outcome $\alpha$ is also indifferent between $-f$ and $-\alpha$. In this case we prove that weights for gains and losses are dual with respect to each other and that CPT reduces to a \v Sipo\v s integral, see \v Sipo\v s \cite{Sipos}. This result clarifies the relation of CPT with the \v Sipo\v s integral that was first noticed  by Starmer and Sudgen \cite{Starmer89} (see also Wakker \cite{WakkerPT} and Kothiyal \textit{et al.} \cite{KSW}).

 Of course, there are several axiomatizations of CPT available in the literature. The concept of comonotonicity and the fact that acts are rank-ordered are crucial, see Diecidue and Wakker \cite{10-2Wakker}. The very first axiomatization is provided in the seminal paper of Tversky and Kahneman \cite{CPT} and relies on comonotonic independence and a property called double matching. See also Trautmann and Wakker \cite{TW} for a recent characterization using these axioms in a (reduced) Anscombe and Aumann \cite{AA} framework. Wakker and Tversky \cite{WT} pair comonotonicity with trade-off consistency (see also Chateauneuf and Wakker \cite{ChatoWakker99} for the case of risk). The conomotonic sure thing principle approach (or a weakening of it called tail independence) is developed in Chew and Wakker \cite{ChewWakker96}, Zank \cite{Zank} and Wakker and Zank \cite{WZ}. The paper closest to our is the one of Schmidt and Zank \cite{SchmidtZank09}. The authors characterize CPT through an axiom called independence of common increments for comonotonic acts. Interestingly, they obtain a piecewise linear utility function (with a kink about the reference point), as in our axiomatization. We refer the reader to the introductory section of Schmidt and Zank \cite{SchmidtZank09} for a detailed discussion about the advantages of adopting piece-wise linear utility.

The rest of the paper is organized as follows. Section \ref{sec:framework} introduces the framework, the mathematical notations and the behavioral models that we will consider. Section \ref{sec:main} is divided in three subsections and it contains our main results. Section \ref{sec:CPT_math} presents the mathematical characterization of the CPT and \v Sipo\v s functionals, Section \ref{sec:CPT_behave} provides a behavioral characterization of CPT and Section \ref{sec:CPT_ambig_att} discusses  DM's attitude towards uncertainty. Section \ref{sec:conclusion} concludes. All proofs are gathered in the Appendix.

\section{Framework}\label{sec:framework}

Let $S$ be a set of states of the world endowed with a $\sigma$-algebra $\A$. Elements of $\A$ are called \textit{events}. We denote $\F$  the set of all bounded, real-valued, $\A$-measurable functions over $S$, i.e. $\F~=~\{f:S\rightarrow \R| f \text{ is bounded and }\A\text{-mesurable}\}$. A function $f\in \F$ is called \textit{act}. An act can be interpreted as an asset that pays  a monetary outcome in $\R$ that depends on the realization of the state of the world. We denote the \textit{positive part} of an act $f\in\F$ by $f^+=f\vee 0$ and the \textit{negative part} by $f^-=(-f)\vee 0$. Note that both positive and negative parts are greater than 0.\footnote{Note that several papers studying prospect theory use the symbol $f^-$ to denote $f\wedge 0$.}
The set $\F^+=\{f\in \F| f(s)\geq 0,\, \forall s\in S\}$ is the set of positive acts, the set of negative acts $\F^-$ is defined analogously. Two acts $f,g\in \F$ have the \textit{same sign} if either $f,g\in \F^+$ or $f,g\in \F^-$. We say that two acts are of \textit{opposite sign} if one of them is positive and the other is negative. The \textit{support} of an act of $f\in \F$ is the set $supp(f)=\{s\in S | f(s)\neq 0\}$.  Two acts $f,g\in \F$ are \textit{comonotonic} if for all $s,t\in S$, $(f(s)-f(t))(g(s)-g(t))\geq 0$.
Let $A\subseteq S$, $1_A$ is the \textit{indicator function} of the set $A$, i.e. $1_A(s):= \begin{cases}
1 & \text{ if } s\in A\\
0 & \text{ if } s\in A^c
\end{cases}$.
If $\alpha\in \R$, then $\alpha 1_A$ denotes the constant act which pays $\alpha$ in every state $s\in A$.

A \textit{(normalized) capacity} $v$ on the  measurable space $(S,\A)$ is a set function $v:\A \mapsto \mathbb{R}$ such that $v(\emptyset)=0,\,v(S)=1$ and for all $A,B \in \A,\, A\subseteq B \Rightarrow v(A)\leq v(B)$. If $v$ is a capacity, we define its \textit{conjugate} by $\hat{v}(A)=1-v(A^c)$ for all $A\in \A$.  A capacity $v: \A \mapsto \mathbb{R}$ is \textit{convex (concave)} if, for all $A,B\in \A$, $v(A\cup B)+v(A\cap B)\geq(\leq) v(A)+v(B)$. 
Given a capacity $v$ on $(S,\A)$, the \textit{Choquet integral} of $f\in \F$ with respect to $v$ is a functional $C:\F\rightarrow \R$ defined by
$$
C(f)=\int_{S}{f\,dv}:=\int_{-\infty}^0{(v(\{f\geq t\})-1)\,dt} + \int_0^{+\infty}{v(\{f\geq t\})\,dt},
$$
In the following we will remove the subscript $S$ from the integral sign whenever the domain of integration is clear. Given a capacity $v$ on $(S,\A)$, the \textit{\v Sipo\v s integral} (see Sipo\v s~\cite{Sipos}) of  $f\in \F$ with respect to $v$ is  a functional $\check{S}:\F\rightarrow \R$  defined as
$$
\check{S}(f)=\int f^+ dv-\int f^- dv
$$
where the two integrals are Choquet integrals.  The following Lemma 
gives an alternative formulation of the \v Sipo\v s integral when the conjugate capacity is used when evaluating the negative part of a function. Moreover it clarifies the relation between the Choquet integral and the \v Sipo\v s integral.

\begin{lemma}\label{lemma:sipos_ceu}
Let $v$ be a capacity and $\hat{v}$  its conjugate. Then the following holds:
\begin{itemize}
\item $\check{S}(f)= \int f^+ dv+\int- f^- d\hat{v}$;
\item $C(f)=\int f^+ dv+\int- f^-dv=\int f^+dv -\int f^- d\hat{v}$.
\end{itemize}
\end{lemma}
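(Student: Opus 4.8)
The plan is to prove both identities by reducing everything to standard properties of the Choquet integral, in particular the well-known fact that for any $g \in \F$ one has $\int (-g)\, dv = -\int g\, d\hat v$ (the reflection/conjugation rule for the Choquet integral). I would begin by recording this rule explicitly, either by citing it or by a one-line verification from the definition: writing out $\int (-g)\, dv$ via the layer-cake formula and substituting $t \mapsto -t$ turns the sets $\{-g \geq t\}$ into $\{g \geq -t\}$, and $v(\{g \geq -t\}) = v(\{g > -t\}^{cc})$, so after matching the two half-line integrals one gets $-\bigl(\int g\, dv$ computed with $\hat v\bigr)$; the only care needed is the handling of the boundary value $t=0$ and the atoms $\{g = t\}$, which do not affect the integral since they form a countable set. (If the paper is willing to cite Denneberg or Schmeidler for this, the verification can be skipped.)

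Next I would apply this to $g = f^-$. Since $f^- \geq 0$, the act $-f^-$ is negative, and the rule gives
$$
\int -f^-\, d\hat v \;=\; -\int f^-\, dv .
$$
Substituting this into the definition $\check S(f) = \int f^+\, dv - \int f^-\, dv$ immediately yields the first bullet, $\check S(f) = \int f^+\, dv + \int -f^-\, d\hat v$. This is the entire content of the first claim; it is essentially a change of variable.

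For the second bullet I would argue similarly. The first equality $C(f) = \int f^+\, dv + \int -f^-\, dv$ is a direct consequence of additivity of the Choquet integral over the two half-lines: split the layer-cake representation of $C(f)$ at $t=0$, observe that on $(0,\infty)$ only $f^+$ contributes (since $\{f \geq t\} = \{f^+ \geq t\}$ for $t>0$) and on $(-\infty,0)$ one recognizes the integral of $-f^-$ against $v$ by the same substitution $t \mapsto -t$ used above. Then the second equality $\int f^+\, dv + \int -f^-\, dv = \int f^+\, dv - \int f^-\, d\hat v$ is again just the conjugation rule applied to $g = f^-$, exactly as in the first bullet.

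The only genuine subtlety — and the step I would flag as the main obstacle — is making the layer-cake manipulations rigorous at the single point $t=0$, i.e. checking that the decomposition of $C(f)$ into a "$0$ to $+\infty$" piece governed by $f^+$ and a "$-\infty$ to $0$" piece governed by $f^-$ is exact rather than off by the measure of a level set. This is handled by noting that $\{f^+ \geq t\}$ and $\{f \geq t\}$ differ only at $t=0$, that $v(\{f \geq t\}) - 1$ and $-v(\{f^- > -t\})$ agree for $t < 0$ up to the same null issue, and that altering an integrand on a single point does not change a Lebesgue integral. Everything else is bookkeeping, so I would keep the write-up short: state the conjugation rule, apply it to $f^-$ for the first bullet, and combine it with half-line additivity of the Choquet integral for the second.
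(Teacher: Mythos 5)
Your proof is correct, and for the first bullet (and the final equality of the second bullet) it coincides with the paper's argument: the paper isolates the conjugation rule $-\int g\,dv=\int -g\,d\hat v$ for one-signed $g$ as a separate Lemma in the appendix and applies it to $f^-\in\F^+$, exactly as you do. Where you genuinely diverge is the first equality of the second bullet, $C(f)=\int f^+dv+\int -f^-dv$: you obtain it by splitting the layer-cake formula at $t=0$ and observing that $\{f\geq t\}=\{f^+\geq t\}$ for $t>0$ and $\{f\geq t\}=\{-f^-\geq t\}$ for $t<0$, whereas the paper writes $f=f^++(-f^-)$, notes that $f^+$ and $-f^-$ are comonotonic, and invokes comonotonic additivity of the Choquet integral from Schmeidler's theorem. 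Your route is more elementary and self-contained (it needs nothing beyond the definition), while the paper's is shorter given that Schmeidler's theorem is already on the table; your worry about the point $t=0$ is harmless since changing an integrand at one point does not affect the Lebesgue integral. One small slip in your sketch of the conjugation rule: $\{-g\geq t\}$ equals $\{g\leq -t\}$, not $\{g\geq -t\}$, and it is $v(\{g\leq -t\})=1-\hat v(\{g>-t\})$ that produces the conjugate capacity; this does not affect the conclusion, and in any case you only need the rule for the positive act $f^-$, where it is standard.
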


The main object of this paper is the \textit{(piecewise linear) Cumulative Prospect Theory (CPT)} functional $CPT:\F\rightarrow \R$. It is a generalization of both Choquet and \v Sipo\v s integrals. Consider two capacities $v^+$, $v^-$ and a real number $\lambda>0$, then the \textit{(piecewise linear) CPT} functional $CPT:\F\rightarrow \R$  is defined by
$$
CPT(f)=\int f^+ dv^+-\int \lambda f^- dv^-.
$$

A preference relation $\succsim$ over $\F$ is a complete and transitive binary relation with non-empty strict part. As usual, $f\succsim g$ means ``$f$ is preferred to $g$''. We denote $\succ$ and $\sim$ the strict and weak part of $\succsim$. A functional $I:\F\rightarrow\R$ \textit{represents} $\succsim$ if for all $f,g\in \F$, $f\succsim g$ if and only if $I(f)\geq I(g)$.

\section{Main results}\label{sec:main}

This section contains our two main results. The first one, Theorem \ref{th:CPT}, characterizes mathematically the CPT functional.  The second result, Theorem \ref{th:axiom_CPT}, studies which behavioral axioms a preference relation should satisfy in order to be represented by a CPT functional. 

\subsection{The CPT functional}\label{sec:CPT_math}

We start with a seminal theorem of Schmeidler \cite{Schmeidler86} who provided a characterization of the Choquet functional. Before presenting the result we recall that a functional $I:\F\rightarrow \R$  is \textit{monotonic} if $f\geq g\Rightarrow I(f)\geq I(g)$, where $f\geq g$ means $f(s)\geq g(s)$ for all $s\in S$. Moreover $I$ satisfies \textit{comonotonic additivity} if, whenever $f$ and $g$ are comonotonic, then  $I(f+g)=I(f)+I(g)$.

\begin{theorem}\label{th:Schmeidelr86}{\sc (Schmeidler \cite{Schmeidler86})} Let $I:\F\rightarrow \R$ be a given functional with $I(1_S)=1$. Then the following are equivalent.
\begin{itemize}
\item[(i)] $(a)$ $I$ is monotonic; $(b)$ $I$  satisfies comonotonic additivity.
\item[(ii)] $I$ is a Choquet integral.
\end{itemize}

\end{theorem}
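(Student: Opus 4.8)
The plan is to follow Schmeidler's original argument. The implication $(ii)\Rightarrow(i)$ is the easy direction: if $I(f)=\int f\,dv$ for a normalized capacity $v$, then $I(1_S)=v(S)=1$, monotonicity of $I$ is immediate from monotonicity of $v$, and comonotonic additivity is the classical ``additivity along a common chain'' property of the Choquet integral (two comonotonic acts share a nondecreasing family of upper level sets, so their integrals simply add); I would either cite this or dispatch it in a line.

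For $(i)\Rightarrow(ii)$, I would set $v(A):=I(1_A)$ for $A\in\A$. Since $0\le 1_A\le 1_B\le 1_S$ whenever $A\subseteq B$, monotonicity of $I$ together with $I(1_S)=1$ shows $v$ is a normalized capacity, once we note $v(\emptyset)=I(0)=0$, which follows from comonotonic additivity applied to $0=0+0$ (the zero act is comonotonic with itself). I would then record a short list of consequences of (a) and (b): first, since $f$ is comonotonic with $f$, one gets $I(nf)=nI(f)$ for $n\in\N$ and hence $I(qf)=qI(f)$ for every rational $q>0$; second, since a constant act is comonotonic with every act, $I(f+c\,1_S)=I(f)+I(c\,1_S)$, and combining the previous point with monotonicity (squeeze $c\,1_S$ between rational constant acts) yields $I(c\,1_S)=c$ and hence the translation invariance $I(f+c\,1_S)=I(f)+c$ for all $c\in\R$; third, positive homogeneity $I(\alpha f)=\alpha I(f)$ for all real $\alpha>0$, proved by replacing $f$ with $g=f+\|f\|_\infty 1_S\ge 0$, observing $qg\le\alpha g\le rg$ for rationals $q<\alpha<r$, applying monotonicity and rational homogeneity to get $q\,I(g)\le I(\alpha g)\le r\,I(g)$, letting $q\uparrow\alpha$ and $r\downarrow\alpha$, and undoing the shift via translation invariance.

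With these tools in hand I would identify $I$ with $\int\cdot\,dv$ in three stages. For a nonnegative simple act $h$ with distinct values $0=v_0<v_1<\dots<v_k$ and upper level sets $D_j=\{h\ge v_j\}$ (a decreasing chain $D_1\supseteq\dots\supseteq D_k$), write $h=\sum_{j=1}^k (v_j-v_{j-1})\,1_{D_j}$; the main point is that each partial sum $\sum_{j\le m}(v_j-v_{j-1})1_{D_j}$ is comonotonic with $1_{D_{m+1}}$ (one checks directly that the partial sum is constant and equal to its maximal value $v_m$ on all of $D_{m+1}$, which is exactly the condition for comonotonicity with the indicator $1_{D_{m+1}}$), so comonotonic additivity can be iterated $k-1$ times; together with positive homogeneity this gives $I(h)=\sum_j (v_j-v_{j-1})v(D_j)$, which is precisely $\int h\,dv$ by the elementary computation of the Choquet integral of a simple function. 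An arbitrary simple act $g$ is then handled by writing $g=(\min_S g)\,1_S+h$ with $h$ nonnegative simple, so that translation invariance yields $I(g)=\min_S g+I(h)=\int g\,dv$. Finally, for general $f\in\F$ I would choose simple acts $g_{\varepsilon}^{-}\le f\le g_{\varepsilon}^{+}$ with $g_{\varepsilon}^{+}-g_{\varepsilon}^{-}\le\varepsilon\,1_S$ (uniform approximation of a bounded measurable function); monotonicity of both $I$ and the Choquet integral, plus translation invariance, give $|I(f)-\int f\,dv|\le\varepsilon$ for every $\varepsilon>0$, hence $I(f)=\int f\,dv$.

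The step I expect to be the real obstacle is the comonotonicity of the partial sums in the chain decomposition, since that is precisely where comonotonic additivity is invoked and the whole argument hinges on it; a secondary delicate point is the shift-and-squeeze device needed to upgrade rational homogeneity to positive homogeneity for acts that change sign. Everything else (the passage from simple acts to $\F$ and the verification that the listed identities are mutually non-circular) is routine.
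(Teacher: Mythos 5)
Your proof is correct and follows essentially the same route as Schmeidler's original argument, which the paper cites rather than reproves; in particular, your chain decomposition of a simple act into increments $\sum_{j}(v_j-v_{j-1})1_{D_j}$, with each partial sum shown comonotonic with the next indicator so that comonotonic additivity can be iterated, is exactly the device the paper itself deploys in its proof of Theorem \ref{th:CPT} for the negative part. The auxiliary steps (rational homogeneity, translation invariance, the shift-and-squeeze upgrade to positive homogeneity, and uniform approximation by simple acts) are all sound as written.
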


The CPT functional generalizes the Choquet functional by relaxing comonotonic additivity. More specifically, comonotonic additivity will be retained only for comonotonic acts of the same sign and for (comonotonic) acts of opposite sign with disjoint supports. The following is our first main result.

\begin{theorem}\label{th:CPT}  Let $I:\F\rightarrow \R$ be a given functional satisfying $I(1_S)=1$. Then the following are equivalent.
\begin{itemize}
\item[(i)] $(a)$ $I$ is monotonic; $(b)$ $I$  satisfies comonotonic additivity on $\F^+$ and $\F^-$ and for acts $f,g$ of opposite sign such that $supp(f)\cap supp(g)=\emptyset$. 
\item[(ii)]  $I$ is a CPT functional.
\end{itemize}
\end{theorem}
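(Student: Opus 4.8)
The direction (ii)$\Rightarrow$(i) is the easy one: if $I$ is a CPT functional, monotonicity follows from monotonicity of the Choquet integral applied to $f^+$ and $f^-$ separately (noting $f\geq g$ forces $f^+\geq g^+$ and $f^-\leq g^-$), and $I(1_S)=\int 1_S\,dv^+ = v^+(S)=1$. For the comonotonic additivity claims, if $f,g\in\F^+$ are comonotonic then $f^-=g^-=(f+g)^-=0$ and $(f+g)^+=f+g=f^++g^+$ with $f^+,g^+$ comonotonic, so comonotonic additivity of the Choquet integral gives $I(f+g)=I(f)+I(g)$; the case $f,g\in\F^-$ is symmetric using the $-\lambda\int f^-dv^-$ term. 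If $f,g$ have opposite sign and disjoint supports, say $f\geq 0$ and $g\leq 0$, then $(f+g)^+=f$ and $(f+g)^-=-g=g^-$, so again the two pieces split additively.

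The substantive direction is (i)$\Rightarrow$(ii). The plan is to first restrict $I$ to $\F^+$ and to $\F^-$ and invoke Schmeidler's Theorem \ref{th:Schmeidelr86} on each piece. On $\F^+$, monotonicity and comonotonic additivity (restricted to $\F^+$) hold, and $I(1_S)=1$; a standard extension argument (any functional on $\F^+$ that is monotonic, comonotonically additive, and positively homogeneous—homogeneity coming for free from comonotonic additivity plus monotonicity via the usual rational-scaling/continuity argument—extends to a Choquet functional, or one directly checks Schmeidler's hypotheses apply to the cone $\F^+$) yields a capacity $v^+$ with $I(f)=\int f\,dv^+$ for all $f\in\F^+$. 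For $\F^-$, apply the same reasoning to the functional $g\mapsto -I(-g)$ on $\F^+$: it is monotonic, comonotonically additive, and sends $1_S$ to $-I(-1_S)$. Here one must be a little careful: $I(-1_S)$ need not equal $-1$, so set $\lambda := -I(-1_S)\geq 0$ (nonnegativity from monotonicity since $-1_S\leq 0=I(0)$, and $I(0)=0$ follows from comonotonic additivity with $f=g=0$). Then $g\mapsto -I(-g)/\lambda$ (assuming $\lambda>0$; the degenerate case $\lambda=0$ forces $I\equiv 0$ on $\F^-$ and is handled separately or absorbed) satisfies Schmeidler's hypotheses with value $1$ at $1_S$, producing a capacity $v^-$ with $-I(-g)=\lambda\int g\,dv^-$ for $g\in\F^+$, i.e. $I(h)=-\lambda\int h^-\,dv^-$ for $h\in\F^-$.

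It remains to glue the two representations for a general $f\in\F$. Write $f=f^+ + (-f^-)=f^+-f^-$ where $f^+\in\F^+$, $-f^-\in\F^-$, and these two acts have disjoint supports and opposite sign (when both are nonzero). Apply hypothesis (i)(b), the opposite-sign disjoint-support additivity, to conclude $I(f)=I(f^+)+I(-f^-)=\int f^+\,dv^+ - \lambda\int f^-\,dv^-=CPT(f)$, and $I(1_S)=v^+(S)=1$ matches the normalization. The main obstacle is not any single deep step but the careful bookkeeping around the normalization constant $\lambda$ (extracting it as $-I(-1_S)$ and justifying $\lambda>0$ or dispatching $\lambda=0$), together with verifying that Schmeidler's theorem—stated in the excerpt for functionals on all of $\F$ with $I(1_S)=1$—can legitimately be applied to the restrictions to the cones $\F^+$ and $\F^-$; this typically requires an intermediate lemma that a monotonic, comonotonically additive functional on $\F^+$ extends uniquely to such a functional on $\F$, or a direct inspection that Schmeidler's proof only ever uses comonotonic acts lying in a common cone. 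I would isolate that as a preliminary lemma before assembling the three pieces above.
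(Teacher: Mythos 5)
Your proposal is correct and follows essentially the same route as the paper's proof: apply Schmeidler's argument separately on the cones $\F^+$ and $\F^-$ (the paper redoes the layer-decomposition computation explicitly on $\F^-$ rather than stating a cone-extension lemma), extract $\lambda=-I(-1_S)$, and glue via the opposite-sign disjoint-support additivity applied to $f=f^++(-f^-)$. Your remark about the degenerate case $\lambda=0$ is well taken --- the paper divides by $\lambda$ without addressing it, and indeed $I(f)=\int f^+\,dv^+$ satisfies condition (i) while failing to be a CPT functional with $\lambda>0$, so that case genuinely must be excluded or handled separately.
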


Consider item $(i)$ of both Theorem \ref{th:Schmeidelr86} and Theorem \ref{th:CPT}. Note that part (b) of Theorem \ref{th:Schmeidelr86} implies (b) of Theorem \ref{th:CPT}, as acts with opposite sign and disjoint supports are comonotonic. This relaxation not only characterize a functional that is more general than the Choquet integral, but it also gives some important insights from a behavioral point of view.

Recall that comonotonic additivity is a weakening of full-fledged  additivity, a property that would force the functional $I$ to be linear, and hence an expectation. The behavioral intuition behind comonotonic additivity is that adding two comonotonic acts does not permit possible hedging against choices of nature. 

Relaxing comonotonic additivity allows us to uncover more sophisticated attitudes towards uncertainty and more subtle forms of hedging. The fist remarkable property of the CPT functional is that it differentiates agents' behavior in the domain of gains (i.e. $\F^+$) from the one in the domain of losses (i.e. $\F^-$). The outcome for which behavior changes, namely the monetary outcome 0, is called the \textit{reference point}.\footnote{In this paper, the reference point is exogenously given and it is normalized to 0 for convenience (we could have chosen any other reference point $r\in \R$). Schmidt and Zank \cite{SchmidtZank12} provide axioms to make the reference point endogenous.} Comonotonic additivity is preserved whenever acts under considerations are both above or both below the reference point. Comonotonic additivity over  $\F^+$ and  $\F^-$ weakens a condition already well known in the literature called cosigned independence. Two acts $f,g\in \F$ are \textit{sign-comonotonic} or simply \textit{cosigned} if they are comonotonic and there exists no $s\in S$ such that $f(s)>0$ and $g(s)<0$, see Wakker and Tversky \cite{WT} and Trautmann and Wakker~\cite{TW}.

One of the main contribution of the present paper lies in the second comonotonic additivity requirement that characterizes the $CPT$ functional, namely $f,g$ of opposite sign such that $supp(f)\cap supp(g)=\emptyset$ implies $CPT(f+g)=CPT(f)+CPT(g)$. This means that comonotonic additivity can fail if we have $f,g$ of opposite sign and $supp(f)\cap supp(g)\neq\emptyset$ (we underline again that such acts are comonotonic). The behavioral intuition behind this requirement is that adding the positive and negative parts of two acts can provide a hedge against possible choices of nature even when acts under consideration are comonotonic. We call this property \textit{gain-loss hedging}.\footnote{We thank Peter Wakker for suggesting this terminology.} This hedging possibility it is not considered for instance in the Choquet model, where the only way to hedge is to add two non-comonotonic acts. The following example provides more details for the particular case in which CPT reduces to a \v Sipo\v s integral, i.e.  $\lambda=1$ and $v^+=v^-$.

\begin{example}\label{ex:gain-loss-hedge}
Let $S=\{s_1,s_2,s_3\}$ and consider a CPT functional with $\lambda=1$ and  $v=v^+=v^-$ (i.e. a \v Sipo\v s integral). Let $v$ be defined as

\begin{center}
\begin{tabular}{c|c|c|c|c|c|c|c|c}
$A$ & $S$ & $\emptyset$ & $s_1$ & $s_2$  & $s_3$ & $s_1\cup s_2$ & $s_2\cup s_3$ & $s_1\cup s_3$   \\ \hline
$v$ & 1 & 0 & $\frac{2}{3}$ & $\frac{1}{3}$ & 0 & $\frac{2}{3}$ & $\frac{2}{3}$ &  1  \\ 
\end{tabular}
\end{center}
Consider now the following acts on $S$.
\begin{center}
\begin{tabular}{c|c|c|c}
	 &  $s_1$ & $s_2$  & $s_3$    \\ \hline
 $f$ &  3 & 4  & 4   \\ 
 $g$ &  0 & 11  & 0   \\ 
 $h$ &  -3 & 0  & -1   \\ 
 $-h$ &  3 & 0  & 1   \\ 
 $f+h$ &  0 & 4  & 3   \\ 
 $g+h$ &  -3 & 11  & -1   \\ 
\end{tabular}
\end{center}
Acts $f,g,h$ are comonotonic, but $supp(g)\cap supp(h)=\emptyset$ while $supp(f)\cap supp(h)\neq\emptyset$. Let $\succsim_{\check{S}}$ be the preference relation induced by the $\check{S}$ functional, i.e. $f\succsim_{\check{S}} g \Leftrightarrow \check{S}(f)\geq \check{S}(g)$ and  $\succsim_C$ the one induced by the $C$ functional (both functionals $\check{S}$ and $C$ are defined in Section \ref{sec:framework}). We have
\begin{align*}
\check{S}(f)=C(f)= & 3+(4-3)\frac{2}{3}=\frac{11}{3} \\
\check{S}(g)=C(g)= & 0+(11-0)\frac{1}{3}=\frac{11}{3}
\end{align*}
and therefore $f\sim_{\check{S}}g$ and $f\sim_C g$. Moreover since $h$ is comonotonic with $f$ and $g$, by comonotonic additivity  $f+h\sim_C g+h$ (one can actually verify that $C(f+h)=C(f)+C(h)=\frac{7}{3}=C(g)+C(h)=C(g+h)$). However we can notice that the act $f+h$ looks much  ``smoother'' than $g+h$ and moreover $f+h\geq0$ since gains balance losses. This intuition is captured by the preference relation induced by the \v Sipo\v s integral as
\begin{align*}
\check{S}(f+h)= & 0+(3-0)\frac{2}{3}+(4-3)\frac{1}{3}=\frac{7}{3} \\
\check{S}(g+h)= & C(g)- C(-h)= \frac{11}{3}- (0+(1-0)1+(3-1)\frac{2}{3})=\frac{4}{3}
\end{align*}
and therefore $f+h\succ_{\check{S}} g+h$.

\end{example}

Example \ref{ex:gain-loss-hedge} shows that gain-loss hedging is an interesting behavioral feature of CPT and of \v Sipo\v s integrals. Adding positive and negative acts with supports that are not disjoint, can provide an hedge even when the acts involved are comonotonic. This happens because gains compensate losses.
 In the next section we provide a new behavioral foundation of CPT taking this observation as a starting point. 

Example \ref{ex:gain-loss-hedge} shows that preferences represented by  \v Sipo\v s integrals are rich enough to entail gain-loss hedging behaviors. It is therefore interesting to mathematically characterize Sipo\v s integrals. Theorem \ref{th:Sipos} shows that a symmetric condition pins down a CPT functional as a \v Sipo\v s integral.

\begin{theorem}\label{th:Sipos}  A CPT functional is a \v Sipo\v s integral if and only if $CPT(-f)=-CPT(f)$ for all $f\in \F$.
\end{theorem}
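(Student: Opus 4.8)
The plan is to prove both implications directly from the definitions, using Lemma \ref{lemma:sipos_ceu} to translate between the \v Sipo\v s integral and the CPT functional written in terms of conjugate capacities. For the "only if" direction, suppose $CPT$ is a \v Sipo\v s integral, i.e. $\lambda = 1$ and $v^+ = v^- =: v$. Then $CPT(f) = \int f^+\, dv - \int f^-\, dv = \check S(f)$. The key observation is that $(-f)^+ = f^-$ and $(-f)^- = f^+$ as pointwise operations, so $\check S(-f) = \int (-f)^+\, dv - \int (-f)^-\, dv = \int f^-\, dv - \int f^+\, dv = -\check S(f)$. Hence $CPT(-f) = -CPT(f)$ for all $f \in \F$, with essentially no computation.

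For the "if" direction — which I expect to be the substantive part — assume $CPT(-f) = -CPT(f)$ for all $f \in \F$ and show $\lambda = 1$ and $v^+ = v^-$. First I would pin down $\lambda$ by testing the symmetry condition on a simple indicator act: take $f = 1_S$, so $CPT(1_S) = \int 1_S^+\, dv^+ - \int \lambda\, 1_S^-\, dv^- = v^+(S) = 1$ (since $1_S^- = 0$), while $CPT(-1_S) = \int (-1_S)^+\, dv^+ - \int \lambda (-1_S)^-\, dv^- = -\lambda\, v^-(S) = -\lambda$. The hypothesis forces $-\lambda = -CPT(1_S) = -1$, so $\lambda = 1$. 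Next, to show $v^+ = v^-$, fix an event $A \in \A$ and evaluate the condition on $f = 1_A$. We get $CPT(1_A) = \int 1_A\, dv^+ = v^+(A)$, since $1_A$ is nonnegative. On the other hand $(-1_A)^+ = 0$ and $(-1_A)^- = 1_A$, so with $\lambda = 1$ already established, $CPT(-1_A) = -\int 1_A\, dv^- = -v^-(A)$. The symmetry hypothesis $CPT(-1_A) = -CPT(1_A)$ then reads $-v^-(A) = -v^+(A)$, i.e. $v^+(A) = v^-(A)$. Since $A$ was arbitrary, $v^+ = v^-$, and therefore $CPT(f) = \int f^+\, dv^+ - \int f^-\, dv^+ = \check S(f)$ with $v = v^+$.

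The main (minor) obstacle is making sure the CPT representation is unique enough that pinning down $\lambda$ and the two capacities on indicators actually pins them down as parameters of the functional; but this is immediate because $v^\pm(A)$ and $\lambda$ are recovered from the functional values $CPT(\pm 1_A)$ and $CPT(\pm 1_S)$ by the explicit formulas above, so no appeal to a separate uniqueness statement is needed. One should also note in passing that the definition of CPT requires $\lambda > 0$, so the equation $\lambda = 1$ is consistent, and that once $\lambda = 1$ and $v^+ = v^-$ hold the resulting functional is precisely the \v Sipo\v s integral as defined in Section \ref{sec:framework}, closing the equivalence.
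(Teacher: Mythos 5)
Your proof is correct and takes essentially the same route as the paper's: the forward direction via the pointwise identities $(-f)^{+}=f^{-}$ and $(-f)^{-}=f^{+}$, and the converse by evaluating the symmetry condition on $1_S$ to pin down $\lambda=1$ and then on $1_A$ to conclude $v^{+}=v^{-}$. Your closing remark on recovering the parameters from $CPT(\pm 1_A)$ is a slightly more careful touch than the paper bothers with, but changes nothing of substance.
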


Theorem \ref{th:Sipos} says that CPT reduces to  a \v Sipo\v s integral if and only if the condition $CPT(-f)=-CPT(f)$ for all $f\in \F$ is satisfied. This is an interesting result as such condition is a strong one. As an example,  if $C$ is a Choquet functional then $C(-f)=-C(f)$ for all $f\in \F$ if and only if the capacity $v$ equals its conjugate $\hat{v}$, and therefore it is additive on events $\{A,A^c\}$.

\subsection{A behavioral characterization of CPT}\label{sec:CPT_behave}

In this section we provide a preference axiomatization of CPT.  We recall that a preference relation $\succsim$ over $\F$ is a complete and transitive binary relation with non-empty strict part.

The first axiom is a continuity axiom. 

\medskip
\noindent \textsc{A.1 Continuity.} The sets $\{\alpha\in\R| \alpha1_S\succsim f\}$ and $\{\alpha\in\R| f\succsim  \alpha1_S\}$ are closed for all $f\in \F$. 
\medskip

Note that the axiom requires only to compare acts with constants. This dispenses us to formulate topological assumptions on the set of acts $\F$.

The second axiom is a monotonicity property.

\medskip
\noindent \textsc{A.2 Monotonicity.}  Let $f,g\in \F$  be such that $f \geq g$. Then $f\succsim g$.
\medskip

Consider now the well known comonotonic independence axiom (Chateauneuf \cite{Chato94}, Schmeidler \cite{Schmeidler89}). It says that if two acts $f$ and $g$ are indifferent to each other, then adding a comonotonic act $h$  to both of them does not change the DM's preferences. The idea behind this condition is that adding comonotonic acts does not provide any possible hedge against uncertainty.

\medskip
\noindent \textsc{A.C Comonotonic Independence.}  Let $f,g,h\in \F$ such that $h$ is comonotonic with $f$ and with $g$. Then $f\sim g$ implies $f+h\sim g+h$ 
\medskip

Preferences satisfying A.1, A.2 and A.C are represented by a Choquet integral. We present this result in the next proposition

\begin{theorem}{\sc (Chateauneuf \cite{Chato94}, Schmeidler \cite{Schmeidler89})}
Let $\succsim$ be a preference relation over $\F$. Then the following are equivalent.
\begin{itemize}
\item[(i)] $\succsim$ satisfies A.1, A.2 and A.C.
\item[(ii)]  There exists a (unique) capacity $v$ such that $\succsim$ is represented by a Choquet functional.
\end{itemize}
\end{theorem}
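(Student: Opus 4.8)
The plan is to prove the equivalence (i) $\Leftrightarrow$ (ii) for preferences represented by a Choquet functional. The direction (ii) $\Rightarrow$ (i) is the routine one: if $\succsim$ is represented by a Choquet integral $C(\cdot)$ with respect to a capacity $v$, then A.1 follows from continuity of $t\mapsto C(f+t1_S)=C(f)+t$ (the Choquet integral is translation-covariant), A.2 follows from monotonicity of the Choquet integral in the integrand, and A.C follows from comonotonic additivity of the Choquet integral (Theorem~\ref{th:Schmeidelr86}, part (i)(b)): if $C(f)=C(g)$ and $h$ is comonotonic with both, then $C(f+h)=C(f)+C(h)=C(g)+C(h)=C(g+h)$. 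Uniqueness of $v$ is immediate since $v(A)$ is recovered as the certainty equivalent of $1_A$, i.e.\ the unique $\alpha$ with $1_A\sim\alpha1_S$.

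For the substantive direction (i) $\Rightarrow$ (ii), the strategy is to construct a representing functional and then invoke Theorem~\ref{th:Schmeidelr86}. First I would use A.1 together with A.2 to define, for each $f\in\F$, a certainty equivalent: the sets $\{\alpha\mid \alpha1_S\succsim f\}$ and $\{\alpha\mid f\succsim\alpha1_S\}$ are closed (A.1), nonempty and bounded (A.2, since $f$ is bounded so $\inf f\,1_S\precsim f\precsim \sup f\,1_S$), hence they are closed half-lines whose union is $\R$ by completeness; their intersection is a single point $I(f)$, so $f\sim I(f)1_S$. One checks $I(1_S)=1$ trivially and that $I$ represents $\succsim$ (using transitivity and the fact that $\alpha1_S\succsim\beta1_S \Leftrightarrow \alpha\geq\beta$, which itself follows from A.2 and the non-degeneracy of $\succsim$). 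Monotonicity of $I$ is then immediate from A.2.

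The key remaining step is to show $I$ satisfies comonotonic additivity, after which Theorem~\ref{th:Schmeidelr86} identifies $I$ as a Choquet integral and we are done. Here is where A.C does the work: given comonotonic $f,g$, I want $I(f+g)=I(f)+I(g)$. Since $f\sim I(f)1_S$ and the constant act $I(f)1_S$ is comonotonic with $g$ (constants are comonotonic with everything) and with $f$, A.C gives $f+g\sim I(f)1_S+g$. Now $g\sim I(g)1_S$, and $I(f)1_S$ is comonotonic with both $g$ and $I(g)1_S$, so a second application of A.C gives $I(f)1_S+g\sim I(f)1_S+I(g)1_S=(I(f)+I(g))1_S$. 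By transitivity $f+g\sim(I(f)+I(g))1_S$, whence $I(f+g)=I(f)+I(g)$. The one subtlety to handle carefully is that A.C as stated requires a \emph{single} act $h$ comonotonic with both acts being compared; in the two applications above the "$h$" is $g$ (resp.\ $I(f)1_S$) and one must verify comonotonicity of $g$ with the constant act $I(f)1_S$ and of $I(f)1_S$ with $I(g)1_S$ — both trivial since constants are comonotonic with all acts. The main obstacle is really just the careful bookkeeping in the certainty-equivalent construction (nonemptiness, closedness, single-valuedness) and making sure each invocation of A.C has its comonotonicity hypotheses literally satisfied; once $I$ is built and shown monotonic and comonotonically additive, the heavy lifting is delegated to Theorem~\ref{th:Schmeidelr86}.
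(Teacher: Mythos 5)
Your proof is correct and follows essentially the same route the paper takes for its own generalization, Theorem \ref{th:axiom_CPT} (the present theorem is stated in the paper without proof, as a cited result): certainty equivalents from A.1--A.2, comonotonic additivity of $I$ via two applications of the independence axiom with a constant act in the role of the indifferent partner (exactly as in Step \ref{step:como_add}), and delegation of the heavy lifting to the representation theorem, here Theorem \ref{th:Schmeidelr86}. The one imprecision is your claim that $\alpha>\beta\Rightarrow\alpha 1_S\succ\beta 1_S$ ``follows from A.2 and the non-degeneracy of $\succsim$'': A.2 only gives weak preference, and to rule out $\alpha 1_S\sim\beta 1_S$ you must invoke A.C to propagate such an indifference to all constants (adding multiples of $(\alpha-\beta)1_S$ and sandwiching by monotonicity), hence by A.1--A.2 and transitivity to all of $\F$, contradicting the non-empty strict part.
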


However, as Example \ref{ex:gain-loss-hedge} shows, Comonotonic Independence may be too strong as it doesn't take into account (gain-loss) hedging possibilities that arise adding positive and negative acts with non-disjoint supports. The following two axioms, axiom A.3 and A.4, are both implied by Comonotonic Independence. They are at the heart of our behavioral characterization. They generalizes A.C in two directions. First, axiom A.3 allows for different attitudes towards uncertainty in the domain of gain and in the domain of losses. Second, axiom A.4 takes into account possible gain-loss hedging opportunities that arises in situations like the one of Example \ref{ex:gain-loss-hedge}.

\medskip

\noindent \textsc{A.3 Comonotonic Independence for Gain and Losses.}  Let $f,g,h\in \F^{+(-)}$ be such that $h$ is comonotonic with $f$ and $g$. Then $f\sim g$ implies $f+h\sim g+h$.

\medskip

\noindent \textsc{A.4 $\lambda$-Disjoint Independence.} There exists $\lambda>0$ such that for all $f\in \F^{+}$ and $g\in \F^{-}$ such that $supp(f)\cap supp(g)=\emptyset$ and such that $f\sim \alpha1_S$ and $g\sim \beta 1_S$
\begin{enumerate}
\item if $\alpha+\lambda \beta\geq 0$ then $f+g\sim (\alpha+\lambda \beta)1_S$;
\item if $\alpha+\lambda \beta<0$ then $f+g\sim \left(\frac{\alpha+ \lambda\beta }{\lambda} \right) 1_S$.
\end{enumerate}

Axiom A.4 represents the main behavioral novelty. To better understand it, note that it is implied by the following (stronger) axiom.

\medskip

\noindent \textsc{A.4$^*$ Disjoint Independence.} For all $f\in \F^{+}$ and $g\in \F^{-}$ such that $supp(f)\cap supp(g)=\emptyset$ and such that $f\sim \alpha1_S$ and $g\sim \beta 1_S$, one has $f+g\sim (\alpha+\beta)1_S$.

\medskip

It is easy to see that A.4$^*$ follows from A.4 imposing $\lambda=1$. A.4$^*$  requires that the act $f+g$ is evaluated as the sum of its constant equivalent. In the general case, we can have $\lambda\neq 1$ and in this case A.4 says that the constant equivalent of $f+g$ depends on the sign of $\alpha+\lambda \beta$. The interpretation for the case of loss-aversion, $\lambda>1$, is the following. The DM outweighs losses by a factor of $\lambda$ and considers $\lambda\beta$ instead of $\beta$ \textit{tout-court}. If $\alpha+\lambda \beta>0$ then the DM feels ``overall in the domain of gains'' and the certainty equivalent of $f+g$ is positive and such that $\alpha>0$ is balanced by $\lambda \beta<\beta<0$, i.e. the  certainty equivalent $\beta$ of losses is outweighed by a factor of $\lambda$. If $\alpha+\lambda \beta<0$ then the DM feels ``overall in the domain of losses'' and in this case the certainty equivalent of $f+g$ is negative and  equal to $\beta<0$ plus $\frac{\alpha}{\lambda}>0$, i.e. the certainty equivalent $\alpha$ of the positive part decreased by a factor of $\lambda$ (since $\frac{\alpha}{\lambda}<\alpha$).

Importantly, $\lambda$ can be determined in the lab:  take $f\in\F^+$ and $g\in\F^-$ such that $supp(f)\cap supp(g)=\emptyset$, ask the certainty equivalents $\alpha$, $\beta$ and $\gamma$ of $f$, $g$ and $f+g$ respectively. If $\gamma=\alpha+\beta$, there is no loss-aversion or seeking. If $\gamma\neq\alpha+\beta$, then if $\gamma>0$ we have $\lambda=\frac{\gamma-\alpha}{\beta}$ and if $\gamma<0$ we have $\lambda=\frac{\alpha}{\gamma-\beta}$. There is lively debate on whether loss-aversion is a real phenomena or not, with results on both sides. See for instance Gal and Rucker \cite{Gal} and Gächter, Johnson and Herrmann \cite{Gat}. We hope therefore that A.4 could be helpful to elicit loss-aversion in a setting in which individuals' preferences are represented by the CPT functional with piece-wise constant marginal utility.

When A.C is replaced by A.3 and A.4, we obtain a characterization of the CPT functional. The following is our second main result.

\begin{theorem}\label{th:axiom_CPT}
Let $\succsim$ be a preference relation over $\F$. Then the following are equivalent.
\begin{itemize}
\item[(i)] $\succsim$ satisfies A.1, A.2, A.3 and A.4.
\item[(ii)]  There exist two (unique) capacities $v^+$, $v^-$ and a real number $\lambda>0$ such that $\succsim$ is represented by a CPT functional.
\end{itemize}
\end{theorem}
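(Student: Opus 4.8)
The plan is to prove Theorem \ref{th:axiom_CPT} by reducing it to the mathematical characterization in Theorem \ref{th:CPT}. The direction (ii) $\Rightarrow$ (i) is routine: one checks directly that a CPT functional is monotonic (from monotonicity of the Choquet integral and $\lambda>0$), that it satisfies A.1 (continuity in the constant direction follows since $CPT(\alpha 1_S)=\alpha$ and $CPT$ is, on any order-bounded set, continuous), and that A.3 and A.4 hold — A.3 because on $\F^+$ (resp. $\F^-$) the CPT functional coincides with a single Choquet integral, which is comonotonically additive, and A.4 because for $f\in\F^+$, $g\in\F^-$ with disjoint supports one has $(f+g)^+=f$ and $(f+g)^-=-g=g^-$, so $CPT(f+g)=\int f\,dv^+-\int\lambda g^-\,dv^-=CPT(f)+CPT(g)=\alpha+\lambda\beta\cdot(\text{sign-dependent rescaling})$; plugging in the certainty equivalents gives exactly the two cases of A.4.

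The substantive direction is (i) $\Rightarrow$ (ii). First I would use A.1 and A.2 to obtain, for every $f\in\F$, a unique certainty equivalent $c(f)\in\R$ with $f\sim c(f)1_S$: boundedness of $f$ plus monotonicity gives $\inf f\le c(f)\le\sup f$, closedness of the two sets in A.1 gives existence, and completeness/transitivity plus strict monotonicity of $\alpha\mapsto\alpha 1_S$ (which follows from A.2 together with non-triviality of $\succsim$) gives uniqueness. Define $I(f):=c(f)$; then $I$ represents $\succsim$, $I(1_S)=1$, and $I$ is monotonic. The goal is then to verify that $I$ satisfies condition (i)(b) of Theorem \ref{th:CPT}, i.e. comonotonic additivity on $\F^+$, on $\F^-$, and for opposite-sign acts with disjoint supports; once that is done, Theorem \ref{th:CPT} yields $I=\int(\cdot)^+dv^+-\int\lambda(\cdot)^-dv^-$ for some capacities and some $\lambda$, and uniqueness of $v^\pm$ and $\lambda$ follows from the standard argument (evaluate $I$ on indicator acts $1_A$ and on two-valued acts to pin down the capacities, and on a disjoint-support pair to pin down $\lambda$).

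The two parts of the remaining verification go as follows. For comonotonic additivity on $\F^+$ and $\F^-$: this is the classical Schmeidler-type argument. Given comonotonic $f,g$ in $\F^+$, one shows $I(f+g)=I(f)+I(g)$ by first restricting A.3 with $h=g$ to get $I(f+g)=I(c(f)1_S+g)$ whenever $f\sim c(f)1_S$ — but here one must be careful that $g$ is comonotonic with both $f$ and the constant $c(f)1_S$ (constants are comonotonic with everything), so A.3 applies; iterating and using that a constant plus a positive act is still positive, one reduces $I(f+g)$ to $I(c(f)1_S+c(g)1_S)=c(f)+c(g)$. The same works on $\F^-$. For disjoint-support opposite-sign pairs $f\in\F^+$, $g\in\F^-$: here A.4 gives $I(f+g)$ in terms of $\alpha=I(f)$, $\beta=I(g)$ and $\lambda$; to conclude $I(f+g)=I(f)+I(g)$ in the \emph{unscaled} sense required by Theorem \ref{th:CPT}(i)(b) one does \emph{not} need that — rather, Theorem \ref{th:CPT} itself absorbs the $\lambda$. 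So the cleanest route is: show directly that A.3 + A.4 imply that the functional $J(f):=I(f^+)-\lambda I(-f^-)$... — more precisely, I expect the honest path is to first establish additivity separately on $\F^+$ (giving a Choquet integral $\int(\cdot)dv^+$ there by Theorem \ref{th:Schmeidelr86}) and on $\F^-$ (giving $\int(\cdot)dv^-$ there), identify $\lambda$ from A.4 via $\lambda=-I(-1_A)/\big(I(-1_A)\text{-type ratio}\big)$ on a single act, and then use A.4 to glue: for general $f$, write $f=f^++(-f^-)$ with disjoint supports, apply A.4 with the certainty equivalents $\alpha=I(f^+)$, $\beta=I(-f^-)$, and read off $I(f)=\alpha+\lambda\beta$ when $\alpha+\lambda\beta\ge0$ and $I(f)=(\alpha+\lambda\beta)/\lambda$ otherwise — which is exactly $CPT(f)=\int f^+dv^+-\int\lambda f^-dv^-$ in both cases.

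The main obstacle I anticipate is the gluing/consistency step: showing that a single $\lambda>0$ works across \emph{all} disjoint-support pairs (this is part of what A.4 asserts, but one must check the $\lambda$ extracted from the representation on $\F^+$ and $\F^-$ is the same as, and compatible with, the $\lambda$ in A.4), and handling the non-linear kink in A.4 — the case split on the sign of $\alpha+\lambda\beta$ — so that the resulting $I$ is genuinely the CPT functional on \emph{all} of $\F$ and not merely piecewise. Relatedly, one must be careful that the certainty equivalents behave well under the positive/negative decomposition (e.g. that $c(f^+)$ and $c(-f^-)$ are the right quantities to feed into A.4, which requires knowing $f^+\in\F^+$ and $-f^-\in\F^-$ have the claimed disjoint supports — immediate — and that A.4's hypotheses $f\sim\alpha 1_S$, $g\sim\beta 1_S$ are met, which is just the definition of $c$). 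None of these is deep, but the bookkeeping around the kink is where a careless proof would go wrong, so I would treat Theorem \ref{th:CPT} as a black box and spend the effort making the A.3/A.4 $\Rightarrow$ comonotonic-additivity reduction airtight.
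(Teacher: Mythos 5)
Your overall strategy -- reduce to Theorem \ref{th:CPT}, build certainty equivalents from A.1--A.2, get within-sign comonotonic additivity from A.3, and use A.4 to glue gains and losses -- is exactly the paper's, and your (ii)$\Rightarrow$(i) sketch matches the paper's verification of A.4. The gap is in the gluing step, which you yourself flag as the ``main obstacle'' but never actually resolve. The raw certainty-equivalent functional $c$ does \emph{not} satisfy hypothesis (i)(b) of Theorem \ref{th:CPT}: for disjoint-support opposite-sign $f,g$ with $c(f)=\alpha$, $c(g)=\beta$, axiom A.4 gives $c(f+g)=\alpha+\lambda\beta$ or $(\alpha+\lambda\beta)/\lambda$, not $\alpha+\beta$, so you cannot feed $c$ to Theorem \ref{th:CPT} as a black box. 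Your tentative repair $J(f):=I(f^+)-\lambda I(-f^-)$ has the wrong sign (with $I(-f^-)\le 0$ this would be $\ge I(f^+)$), and your closing identity is false as stated: the CPT functional equals $\alpha+\lambda\beta$ in \emph{both} cases of A.4, whereas the certainty equivalent equals $(\alpha+\lambda\beta)/\lambda$ in the second case; the two differ by a strictly increasing kinked transformation, and you never supply the argument that passes from one to the other.

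The paper's device, which is the single missing idea, is to define the functional directly as $I(f):=I(f^+)+I(-f^-)$, where $I(f^+)$ is the certainty equivalent of $f^+$ but $I(-f^-)$ is $\lambda$ \emph{times} the certainty equivalent of $-f^-$ (i.e.\ $-f^-\sim(I(-f^-)/\lambda)1_S$). This $I$ satisfies disjoint-support opposite-sign additivity \emph{by construction}, inherits comonotonic additivity on $\F^+$ and on $\F^-$ from A.3 (the $\lambda$-rescaling on $\F^-$ is harmless), is monotone, and satisfies $I(1_S)=1$, so Theorem \ref{th:CPT} applies verbatim on all of $\F$ -- no cone-restricted version of Schmeidler's theorem is needed, unlike in your route of applying Theorem \ref{th:Schmeidelr86} separately on $\F^+$ and $\F^-$. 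One then checks separately (the paper's Steps 1 and 5) that $f\sim I(f)1_S$ when $I(f)\ge 0$ and $f\sim(I(f)/\lambda)1_S$ when $I(f)<0$, whence $I$ represents $\succsim$ because the kinked map $x\mapsto x$ ($x\ge 0$), $x\mapsto x/\lambda$ ($x<0$) is strictly increasing. This is precisely the ``bookkeeping around the kink'' you anticipated; without it your proof does not close.
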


Note that if we replace A.4 with A.4$^*$ in Theorem \ref{th:axiom_CPT}, we obtain a CPT functional with $\lambda=1$, i.e. loss-neutrality.

\subsection{Attitude towards uncertainty}\label{sec:CPT_ambig_att}

As we already said above, a remarkable property of CPT is that (unlike the Choquet functional) it allows to disentangle DMs' attitude towards uncertainty in the domain of gains from the one in the domain of losses. This is made possible since an act is evaluated through the sum of two Choquet integrals with respect to a capacity $v^+$ for gains, and a different one $v^-$ for losses. 

Experimental evidence shows that  DMs are uncertainty averse for gains and uncertainty seeking for losses. Loosely speaking, uncertainty aversion (seeking) means that agents prefer situations in which objective probabilities of events are (not) available. In our framework, objective probabilities are not there at all. Therefore an act is not uncertain only if it is a constant act. Intuitively, in our purely subjective setting, an uncertainty averse (seeking) DM would prefer acts that are ``as close (far) as possible'' to constant acts. We capture this idea with the two following axioms. 

\medskip
\noindent A.3' Let $f,g,h\in \F^+$  such that $h$ is comonotonic with $g$. Then $f\sim g \Rightarrow f+h\succsim g+h$. 
\medskip

\medskip
\noindent A.3'' Let $f,g,h\in \F^-$  such that $h$ is comonotonic with $f$. Then $f\sim g \Rightarrow f+h\succsim g+h$. 
\medskip

Axiom  A.3' captures the intuition that DMs are uncertainty averse in the domain of gains $\F^+$. Consider three acts $f,g,h\in \F^+$ such that $f\sim g$ and $h$ is comonotonic with $f$ and $g$. Then adding (the potentially non-comonotone act) $h$ to $f$ increases the appreciation of $f$ since $h$ may by an hedge against $f$, while at the same time it decreases the appreciation of $g$ since uncertainty may be higher. To exemplify, let $A\in \A$ and consider  $f=10 \cdot 1_A+5\cdot 1_{A^c}$, $g=5 \cdot 1_A+10 \cdot 1_{A^c}$ and $h=0 \cdot 1_A+5\cdot 1_{A^c}$. Then $f+h=10 \cdot 1_S$ is a constant act while $g+h=5 \cdot 1_A+15 \cdot 1_{A^c}$ is even more uncertain than $g$. A DM who dislikes uncertainty would clearly prefer $f+h$ to   $g+h$. 
Axiom A.3'' can be interpreted similarly, but in this case the DM is willing to increase the perceived uncertainty. Notice that similar conditions were proposed by Chateuneuf \cite{Chato94}, see also Wakker \cite{Wakker90}.

The following theorem shows that if a DM is uncertainty averse for gains and uncertainty seeking for losses then the capacities appearing in the CPT functional are both convex.

\begin{theorem}\label{th:axiom_CPT_convex}
Let $\succsim$ be a preference relation over $\F$. Then the following are equivalent.
\begin{itemize}
\item[(i)] $\succsim$ satisfies A.1, A.2, A.3', A.3'', and A.4.
\item[(ii)]  There exist two (unique) convex capacities $v^+$, $v^-$ and $\lambda>0$, such that $\succsim$ is represented by a CPT functional.
\end{itemize}
\end{theorem}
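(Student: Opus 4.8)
The plan is to reduce Theorem \ref{th:axiom_CPT_convex} to the already-established Theorem \ref{th:axiom_CPT}. The crucial observation is that axioms A.3' and A.3'' are each stronger than A.3: if we take $f,g,h \in \F^+$ with $h$ comonotonic with \emph{both} $f$ and $g$, then A.3' gives $f+h \succsim g+h$ \emph{and}, by swapping the roles of $f$ and $g$ (which is legitimate since $\sim$ is symmetric and $h$ is comonotonic with $f$ as well), $g+h \succsim f+h$, hence $f+h \sim g+h$, which is exactly A.3 restricted to $\F^+$; symmetrically A.3'' yields A.3 on $\F^-$. Therefore (i) of the present theorem implies A.1, A.2, A.3, A.4, so by Theorem \ref{th:axiom_CPT} there exist unique capacities $v^+, v^-$ and $\lambda > 0$ with $\succsim$ represented by the corresponding CPT functional. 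It then remains only to show that A.3' forces $v^+$ to be convex and A.3'' forces $v^-$ to be convex, and conversely that convexity of these capacities (together with the CPT representation) implies A.3' and A.3''.

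For the forward direction, I would work entirely on the gains side; the losses side is analogous after replacing $f$ by $-f$ (or dualizing). Since on $\F^+$ the CPT functional coincides with the Choquet integral $\int \cdot\, dv^+$, the content of A.3' is: for all $f,g,h \in \F^+$ with $h$ comonotonic with $g$, $\int f\, dv^+ = \int g\, dv^+$ implies $\int (f+h)\, dv^+ \geq \int (g+h)\, dv^+$. By comonotonic additivity of the Choquet integral, the right-hand side equals $\int g\, dv^+ + \int h\, dv^+ = \int f\, dv^+ + \int h\, dv^+$, so A.3' is equivalent to the superadditivity inequality $\int(f+h)\,dv^+ \geq \int f\, dv^+ + \int h\, dv^+$ for all $f, h \in \F^+$ (the hypothesis $\int f\,dv^+ = \int g\,dv^+$ together with A.2's continuity-type consequences lets us realize any target value). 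Superadditivity of the Choquet integral on $\F^+$ is a classical equivalent of convexity (supermodularity) of the capacity — this is a standard result (e.g. via Lovász / Schmeidler), and I would either cite it or give the short argument: apply superadditivity to indicator-type acts $f = 1_A$, $h = 1_B$ to get $v^+(A \cup B) + v^+(A \cap B) \geq v^+(A) + v^+(B)$, and conversely use the fact that convexity implies the Choquet integral is superadditive. Care is needed because A.3' only directly gives the inequality when an \emph{equal}-value $g$ exists; one uses Archimedean continuity (A.1) to produce constant-act comparisons and the representation to translate these into the required numerical inequality.

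For the converse direction, assume $v^+, v^-$ convex and $\succsim$ is CPT-represented; I must verify A.3', A.3'', A.4 (and A.1, A.2, which come for free from the representation). A.4 and A.1, A.2 already hold by Theorem \ref{th:axiom_CPT} since a CPT functional always satisfies them. For A.3': given $f,g,h\in\F^+$ with $h$ comonotonic with $g$ and $f\sim g$, comonotonic additivity gives $\int(g+h)\,dv^+ = \int g\,dv^+ + \int h\,dv^+ = \int f\,dv^+ + \int h\,dv^+$, and superadditivity (from convexity of $v^+$) gives $\int(f+h)\,dv^+ \geq \int f\,dv^+ + \int h\,dv^+$, hence $f+h\succsim g+h$. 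A.3'' is symmetric on the loss side, using convexity of $v^-$ and the fact that the CPT functional restricted to $\F^-$ is $-\lambda\int(\cdot)^-\,dv^-$, so the relevant monotone transformation preserves the direction of the inequality as stated.

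The main obstacle I anticipate is the bookkeeping in the forward direction: extracting the clean superadditivity inequality from the preference statement A.3', which is phrased with an auxiliary act $g$ of equal certainty-equivalent rather than as a direct functional inequality. One must be careful that such a $g$ can always be chosen with the required comonotonicity relative to $h$ — here one can simply take $g$ to be a suitable constant act, which is automatically comonotonic with everything, and then invoke A.1 to guarantee the constant equivalent exists — and that passing through the representation does not lose the "for all $f,h$" quantifier needed for full convexity. The losses side requires additionally tracking how the factor $-\lambda$ and the negative-part operation interact with the inequality, but since $t \mapsto -\lambda t$ on the relevant range is order-reversing in a controlled way, this is routine once the gains case is done.
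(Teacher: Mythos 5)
Your proposal is correct and follows essentially the same route as the paper: derive A.3 from A.3'/A.3'' by the swap argument, invoke Theorem \ref{th:axiom_CPT} for the CPT representation, and then obtain convexity by comparing an act with its constant certainty equivalent (comonotonic with everything) and evaluating on indicator acts via $1_A+1_B=1_{A\cup B}+1_{A\cap B}$. Your intermediate reformulation of A.3' as superadditivity of the Choquet integral on $\F^+$ is just a slightly more general packaging of the same computation the paper performs directly on $-1_A,-1_B$ for the loss side.
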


Note that the CPT functional can be rewritten (using Lemma \ref{lemma:basic} in the Appendix) as
\begin{equation}\label{eq:CPT_formula2}
CPT(f)=\int f^+ dv^++\int -\lambda f^- d\hat{v}^-.
\end{equation}
If one is using this formulation then Theorem \ref{th:axiom_CPT_convex} implies that the conjugate capacity $\hat{v}^-$  is concave. 

We conclude this section providing a testable axiom that characterizes symmetric attitudes around the reference point with respect to uncertainty. An interesting question is in fact to understand when one has $v^-=v^+$ in the CPT functional.\footnote{Or, equivalently if one is using formulation (\ref{eq:CPT_formula2}), when one gets $v^-=\hat{v}^+$.} Note that if $\lambda=1$ Theorem \ref{th:Sipos} applies and one gets a \v Sipo\v s integral. Consider the following axiom.

\medskip
\noindent \textsc{A.5 Gain-Loss Symmetry.} Let $f\in \F$ and $\alpha\in \R$. Then $f\sim \alpha1_S$ if and only if $-f\sim -\alpha1_S$.
\medskip

Axiom A.5 says that if a DM is indifferent between an (uncertain) act $f$ and a sure amount $\alpha$, then she should stay indifferent between $-f$ and $-\alpha$. The intuition is that the DM sees $f$ and $-f$ as symmetric with respect to the reference point 0, and therefore evaluates them through the symmetric sure amounts $\alpha$ and $-\alpha$. The following theorem offers a behavioral characterization of the \v Sipo\v s integral.

\begin{theorem}\label{th:axiom_Sipos}
Let $\succsim$ be a preference relation over $\F$. Then the following are equivalent.
\begin{itemize}
\item[(i)] $\succsim$ satisfies A.1, A.2, A.3, A.4 and A.5.
\item[(ii)]  There exists a (unique)  capacity $v$  such that $\succsim$ is represented by  the  \v Sipo\v s integral.
\end{itemize}
\end{theorem}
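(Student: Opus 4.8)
The plan is to bootstrap from the two results already in hand. By Theorem~\ref{th:axiom_CPT}, statement (i) produces capacities $v^+,v^-$ and a real $\lambda>0$ with $\succsim$ represented by the CPT functional $I$ having these parameters, and by Theorem~\ref{th:Sipos} such an $I$ is a \v Sipo\v s integral exactly when it is odd. So the work in the forward implication is to squeeze out of Axiom~A.5 the two facts $v^+=v^-$ and, morally, $\lambda=1$; the converse is comparatively soft.

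\emph{$(i)\Rightarrow(ii)$.} Observe first that $\alpha\mapsto I(\alpha 1_S)$ equals $\alpha$ on $[0,\infty)$ and $\lambda\alpha$ on $(-\infty,0]$, hence is a continuous strictly increasing bijection of $\R$ onto $\R$; with A.1 and A.2 this gives each $f\in\F$ a unique certainty equivalent $\alpha_f$ ($f\sim\alpha_f 1_S$), whose sign is that of $I(f)$. Step one is to feed indicators into A.5: since $I(1_A)=v^+(A)\ge 0$ we have $1_A\sim v^+(A)1_S$, so A.5 gives $-1_A\sim -v^+(A)1_S$; evaluating $I(-1_A)=-\lambda v^-(A)$ and $I(-v^+(A)1_S)=-\lambda v^+(A)$ yields $v^-(A)=v^+(A)$ for all $A\in\A$, and I write $v$ for this common capacity. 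Step two is to feed a general $f$ into A.5: setting $a=\int f^+\,dv\ge 0$ and $b=\int f^-\,dv\ge 0$ one has $I(f)=a-\lambda b$, $I(-f)=b-\lambda a$, while A.5 forces $I(-f)=I(-\alpha_f 1_S)$; substituting $\alpha_f=I(f)$ when $I(f)\ge 0$ and $\alpha_f=I(f)/\lambda$ when $I(f)<0$ and matching the two expressions gives $(1-\lambda^2)b=0$ in the first case and $(1-\lambda^2)a=0$ in the second. Thus either $\lambda=1$, whence $I=\check{S}_v$ outright, or $\lambda\ne 1$ and $\int f^+\,dv\cdot\int f^-\,dv=0$ for every $f$; in the latter case a one-line check shows $\check{S}_v(f)=\psi(I(f))$ for all $f$, where $\psi(t)=t$ on $[0,\infty)$ and $\psi(t)=t/\lambda$ on $(-\infty,0)$ is a strictly increasing bijection of $\R$. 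Either way $\check{S}_v$ represents $\succsim$. Uniqueness of $v$ is immediate: if $\check{S}_v$ and $\check{S}_w$ both represent $\succsim$ then $1_A\sim v(A)1_S$ and $1_A\sim w(A)1_S$ force $v(A)=w(A)$.

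\emph{$(ii)\Rightarrow(i)$.} The \v Sipo\v s integral $\check{S}_v$ is precisely the CPT functional with $v^+=v^-=v$ and $\lambda=1$, so Theorem~\ref{th:axiom_CPT} immediately gives A.1, A.2, A.3 and A.4. For A.5, use $\check{S}_v(\alpha 1_S)=\alpha$ for every $\alpha\in\R$ together with $\check{S}_v(-f)=\int f^-\,dv-\int f^+\,dv=-\check{S}_v(f)$: then $f\sim\alpha 1_S\iff\check{S}_v(f)=\alpha\iff\check{S}_v(-f)=-\alpha\iff -f\sim-\alpha 1_S$.

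The main obstacle is the second step of the forward direction. After A.5 on indicators has forced $v^+=v^-$, one still must rule out a spurious coefficient $\lambda\ne 1$, and this cannot be done by invoking Theorem~\ref{th:Sipos} directly: when $v$ is degenerate enough that no act $f$ has both $\int f^+\,dv>0$ and $\int f^-\,dv>0$, the CPT functional built from $(v,v,\lambda)$ with $\lambda\ne 1$ is not odd yet still represents the same preference as $\check{S}_v$. What saves the day is exactly the dichotomy ``$\lambda=1$, or $\int f^+\,dv\cdot\int f^-\,dv=0$ for all $f$'' extracted from A.5, followed by the explicit monotone reparametrisation $\psi$; the rest is routine.
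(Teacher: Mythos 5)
Your proposal is correct, and while it shares the paper's overall skeleton (invoke Theorem~\ref{th:axiom_CPT} to get a CPT representation, then reduce to \v Sipo\v s), the way you execute the crucial A.5 step is genuinely different and in fact more robust than the paper's. The paper's own argument asserts that $f\sim I(f)1_S$ for \emph{all} $f$, deduces $I(-f)=-I(f)$, and then applies Theorem~\ref{th:Sipos}; but the certainty equivalent of $f$ is $I(f)/\lambda$ when $I(f)<0$, so that assertion is only valid once $\lambda=1$ is already known, and Theorem~\ref{th:Sipos} is a statement about the \emph{functional} being odd, not about the \emph{preference} admitting a \v Sipo\v s representation. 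Your two-stage extraction --- indicators give $v^+=v^-$, then general acts give the dichotomy ``$\lambda=1$ or $\int f^+dv\cdot\int f^-dv=0$ for all $f$'' --- correctly isolates the residual freedom in $\lambda$, and your example-type concern is real: for a $0$--$1$ Dirac capacity $v$ (so $v(A)=1$ iff $s_0\in A$), the CPT functional with parameters $(v,v,\lambda)$, $\lambda\neq 1$, satisfies A.5 and represents the same preference as $\check S_v(f)=f(s_0)$ while failing $I(-f)=-I(f)$; so one genuinely cannot conclude that the representing CPT functional itself is a \v Sipo\v s integral, only that some strictly increasing reparametrization of it is. Your monotone map $\psi$ handles exactly this, and your verifications (the $(1-\lambda^2)b=0$ and $(1-\lambda^2)a=0$ identities, the identity $\check S_v=\psi\circ I$ in the degenerate case, uniqueness of $v$ via $\check S_v(1_A)=v(A)$, and the converse) all check out. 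The trade-off: the paper's route is a one-liner when $\lambda=1$ can be taken for granted, while yours costs an extra case analysis but actually closes the gap; it also implicitly explains why uniqueness in Theorem~\ref{th:axiom_CPT} should be read as uniqueness of the \v Sipo\v s capacity rather than of the triple $(v^+,v^-,\lambda)$ in degenerate situations.
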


\section{Conclusion}\label{sec:conclusion}

We provided an axiomatic analysis of CPT with piece-wise linear utility. This allowed us to focus on (sign-dependent) attitudes towards uncertainty. 
First, we mathematically characterized the CPT functional by weakening the comonotonic additivity property of the Choquet integral. We also gave conditions to reduce CPT to a \v Sipos integral. Then we gave an axiomatic characterization of CPT.  The main novelty is given by a gain-loss hedging property: gains and losses balance each other out and provide an hedge against uncertainty. Moreover, we introduced an axiom that offers a way to easily elicit the coefficient of loss-aversion, in case of piece-wise linear utility. Finally, we characterized uncertainty aversion for losses and uncertainty loving for gains. Moreover we showed that these attitudes are symmetric with respect to the reference point if and only if CPT is  a \v Sipos integral.

\newpage


\appendix
\section{Appendix}
\small

We begin with an elementary Lemma. The proof if given for sake of completeness.

\begin{lemma}\label{lemma:basic}
Let $\hat{v}(A)=1-v(A^c)$ and $f\in \F^+$ or $f\in \F-$. Then $-\int f dv = \int - fd \hat{v}$.
\end{lemma}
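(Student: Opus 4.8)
\textbf{Proof plan for Lemma \ref{lemma:basic}.}

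The plan is to verify the identity directly from the definition of the Choquet integral, splitting into the two cases $f \in \F^+$ and $f \in \F^-$, since in each case one of the two pieces in the defining formula for the Choquet integral vanishes. Recall that for any $g \in \F$,
\[
\int g\, dv = \int_{-\infty}^0 \bigl(v(\{g \geq t\}) - 1\bigr)\, dt + \int_0^{+\infty} v(\{g \geq t\})\, dt.
\]

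First I would treat the case $f \in \F^+$. Then $f \geq 0$, so $-f \leq 0$, and the second integral in the formula for $\int -f\, d\hat{v}$ is $\int_0^{+\infty} \hat{v}(\{-f \geq t\})\, dt = \int_0^{+\infty}\hat v(\{f \le -t\})\,dt$, which for $t>0$ is $\hat v(\emptyset)=0$ except possibly at $t=0$, contributing nothing; so $\int -f\, d\hat{v} = \int_{-\infty}^0 \bigl(\hat{v}(\{-f \geq t\}) - 1\bigr)\, dt$. Substituting $t = -s$ turns this into $\int_0^{+\infty}\bigl(\hat v(\{-f\ge -s\})-1\bigr)\,ds = \int_0^{+\infty}\bigl(\hat v(\{f\le s\})-1\bigr)\,ds$. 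Now use $\hat{v}(A) = 1 - v(A^c)$ with $A = \{f \le s\}$, whose complement is $\{f > s\}$; thus $\hat v(\{f\le s\})-1 = -v(\{f>s\})$, giving $\int -f\, d\hat v = -\int_0^{+\infty} v(\{f > s\})\, ds$. On the other side, since $f \geq 0$ the first piece of $\int f\, dv$ vanishes (for $t<0$, $\{f\ge t\}=S$ so the integrand is $0$), leaving $\int f\, dv = \int_0^{+\infty} v(\{f \geq t\})\, dt$. The integrands $v(\{f \ge t\})$ and $v(\{f > t\})$ differ only on the at most countable set of atoms of the distribution of $f$, hence agree for a.e.\ $t$, so $\int f\, dv = \int_0^{+\infty} v(\{f > t\})\, dt$, and therefore $-\int f\, dv = \int -f\, d\hat v$, as required.

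Next I would treat the case $f \in \F^-$, i.e.\ $f \le 0$ so $-f \ge 0$. This is symmetric: by the same vanishing argument $\int f\, dv = \int_{-\infty}^0 \bigl(v(\{f \ge t\}) - 1\bigr)\, dt$, and $\int -f\, d\hat v = \int_0^{+\infty}\hat v(\{-f\ge t\})\,dt$. Substituting $t=-s$ in the first and using $\hat v(A)=1-v(A^c)$ in the second, both reduce (up to the a.e.-equality of $v(\{-f \ge s\})$ and $v(\{-f > s\})$) to $\pm\int_0^{+\infty} v(\{f \le -s\})\, ds$, establishing $-\int f\, dv = \int -f\, d\hat v$. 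There is really no serious obstacle here; the only mild subtlety — and the step I would be most careful about — is the bookkeeping of the substitution $t\mapsto -t$ together with the passage between $\{f \ge t\}$ and $\{f > t\}$ (equivalently between $\ge$ and $>$ superlevel sets), which is harmless because the two differ on a Lebesgue-null set of values of $t$. Boundedness of $f$ guarantees all integrals are over a finite range and finite.
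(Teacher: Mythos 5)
Your proof is correct and follows essentially the same route as the paper's: a direct computation from the definition of the Choquet integral, a change of variables $t\mapsto -t$, and the conjugate relation $\hat v(A)=1-v(A^c)$ applied to complements of level sets. If anything, you are more careful than the paper about the passage between $\{f\ge t\}$ and $\{f>t\}$, which the paper's displayed computation glosses over.
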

\begin{proof}
Let $f\in \F^+$, then one has
\begin{align*}
-\int f dv &= -\int_{0}^{\infty}v(s\in S|f(s)\geq t) dt \\
	&= - \int_{0}^{\infty}[1-\hat{v}(s\in S|f(s)> t) ]dt \\
	&= - \int_{0}^{-\infty}[1-\hat{v}(s\in S|f(s)> u)](-du) \\
	&= \int^{0}_{-\infty}[\hat{v}(s\in S|f(s)> u)-1]du \\
	&= \int - fd \hat{v}
\end{align*}
The case $f\in \F^-$ can be treated similarly.
\end{proof}

\medskip
\medskip

\begin{proof}[\textbf{Proof of Lemma \ref{lemma:sipos_ceu}}]
To prove the first point we just need to apply Lemma \ref{lemma:basic}. In fact noticing that $f^-\in \F^+$ we have $\check{S}(f)=\int f^+ dv-\int f^- dv=\int f^+ dv+\int - f^-d \hat{v}.$\\
The second point, note that $f=f^++(-f^-)$ and  that $f^+$ and $-f^-$ are comonotonic.
Then by the comonotonic additivity of the Choquet integral proved in Theorem \ref{th:Schmeidelr86}, we have 
$$
\int f dv=\int f^++(-f^-) dv=\int f^+dv +\int -f^- dv.
$$
Note that by Lemma \ref{lemma:basic} one can also write $\int f dv=\int f^+dv -\int f^- d\hat{v}$.
\end{proof}

\medskip
\medskip

\begin{proof}[\textbf{Proof of Theorem \ref{th:CPT}}]
$(i) \Rightarrow (ii)$. We start with an auxiliary Lemma.
\begin{lemma}\label{lem:useful}
For all $\alpha>0$, for all $f\in \F^+\cup \F^-$, $I(\alpha f)=\alpha I(f)$. Moreover for $\alpha>0$ and $f\in \F^+$, $I(f+\alpha1_S)=I(f)+\alpha$.
\end{lemma}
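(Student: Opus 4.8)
The plan is to establish the two claims — positive homogeneity on $\F^+ \cup \F^-$ and translation-by-positive-constants on $\F^+$ — using only monotonicity and the sign-restricted comonotonic additivity granted by hypothesis $(i)(b)$, mimicking the classical argument for the Choquet integral but being careful about signs. First I would prove $I(nf) = nI(f)$ for every positive integer $n$ and every $f \in \F^+ \cup \F^-$: since $f$ is comonotonic with itself and all the partial sums $kf$ ($k \le n$) stay in the same orthant $\F^+$ (resp. $\F^-$), repeated application of comonotonic additivity on $\F^+$ (resp. $\F^-$) gives $I(nf) = nI(f)$. From this, the standard rational-scaling trick yields $I(qf) = qI(f)$ for all positive rationals $q$ (write $I(f) = I(m \cdot \tfrac{1}{m} f) = m\, I(\tfrac1m f)$, then combine). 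To pass to arbitrary real $\alpha > 0$, I would sandwich $\alpha$ between rationals $q_1 < \alpha < q_2$ and invoke monotonicity: if $f \in \F^+$ then $q_1 f \le \alpha f \le q_2 f$, so $q_1 I(f) \le I(\alpha f) \le q_2 I(f)$, and letting $q_1, q_2 \to \alpha$ forces $I(\alpha f) = \alpha I(f)$; if instead $f \in \F^-$ the inequalities reverse but the squeeze still closes. (The degenerate case $I(f)$ could be negative needs no special care since the squeeze is an equality of limits regardless of sign.)

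For the second claim, fix $f \in \F^+$ and $\alpha > 0$. The act $\alpha 1_S$ is comonotonic with every act, in particular with $f$, and both $f$ and $\alpha 1_S$ lie in $\F^+$, and so does their sum; hence comonotonic additivity on $\F^+$ gives $I(f + \alpha 1_S) = I(f) + I(\alpha 1_S)$. It then remains to identify $I(\alpha 1_S) = \alpha$, which follows immediately from the normalization $I(1_S) = 1$ together with the positive-homogeneity just proved (apply it to $f = 1_S \in \F^+$). This completes both assertions.

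The only mild subtlety — and what I'd treat as the main point requiring attention rather than a real obstacle — is making sure that at every step the acts involved genuinely remain within a single orthant, so that the \emph{restricted} comonotonic additivity of $(i)(b)$ is applicable: scaling by a positive constant and adding acts of the same sign both preserve the orthant, so this is automatic, but it must be stated. A second point to handle cleanly is the direction of the monotonicity inequalities when $f \in \F^-$, where multiplying the chain $q_1 f \le \alpha f \le q_2 f$ is still valid pointwise (since $q_1 < q_2$ and $f \le 0$ give $q_1 f \ge q_2 f$), so one gets $q_2 I(f) \le I(\alpha f) \le q_1 I(f)$ and the squeeze still yields equality in the limit. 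No use of part $(i)(b)$'s opposite-sign/disjoint-support clause is needed here; that clause enters only later in the proof of Theorem \ref{th:CPT}.
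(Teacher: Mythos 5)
Your proof is correct and is precisely the standard argument the paper has in mind (the paper's own proof of this lemma is just the remark ``The proof is standard''): integer homogeneity by iterated same-sign comonotonic additivity, rational homogeneity by the scaling trick, real homogeneity by a monotonicity squeeze with the sign-reversal for $\F^-$ handled as you describe, and the translation property from comonotonicity of constants plus $I(1_S)=1$. Nothing further is needed.
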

\begin{proof}[Proof of Lemma \ref{lem:useful}]
The proof is standard.
\end{proof}
Let $v^+(A)=I(1_A)$, then doing the same proof as Schmeidler \cite{Schmeidler86} one can show that for all $f\in \F^+$, $I(f)=\int f dv^+$.
Now let $\lambda:=-I(-1_S)$. By comonotonic additivity of $I$, $I(0)=0$. By monotonicity of $I$, $I(-1_S)\leq I(0)=0$. Then $\lambda\geq 0$. Define for all $A\in \A$, 
$$
v(A)=-\frac{I(-1_A)}{\lambda}.
$$
We have $v(\emptyset)=0$ and $v(S)=1$. Take $A\subseteq B$ so that $-1_A\geq-1_B$. Since $I$ is monotonic, $I(-1_A)\geq I(-1_B)$ and therefore $v(A)\leq v(B)$. This show that $v$ is a capacity.  Define $v^-$ as the conjugate capacity of $v$, meaning that for all $A\in \A$,
$$
v^-(A)=1-v(A^c).
$$
We will show that for all $f\in \F^-$, $f$ simple, $I(f)=\int \lambda f dv^-$. Let $f\in \F^-$ be defined as
$$
f=x_11_{A_1}+\dots +x_n1_{A_n}
$$
where $\{A_1,\dots,A_n\}$ is a partition of $S$ and $x_1\leq \dots \leq x_n \leq 0$. Note that we can rewrite $f$ as
$$
f=(0-x_n)(-1_S)+(x_n-x_{n-1})(-1_{A_{n-1}\cup\dots\cup A_1})+\dots+(x_3-x_2)(-1_{A_2\cup A_1})+(x_2-x_1)(-1_{A_1}).
$$
Define 
$$
h_i=(x_{i+1}-x_{i})(-1_{A_{i}\cup\dots\cup A_1})
$$
with the convention that $x_{n+1}=0$. We have that 
$$
f=\sum_{i=0}^n h_i
$$

We show now that $h_i$ is comonotone with $\sum_{k=i+1}^n h_k$. Consider $s,t\in S$ be such that $s\in A_{i}\cup\dots\cup A_1$ and $t\in (A_{i}\cup\dots\cup A_1)^c$, suppose $t\in A_l$ for $l>i$. Then $h_i(s)-h_i(t)=x_i-x_{i+1}\leq 0$ and $\sum_{k=i+1}^n h_k(s)-\sum_{k=i+1}^n h_k(t)=x_i-x_l\leq 0$, hence $(h_i(s)-h_i(t))\left(\sum_{k=i+1}^n h_k(s)-\sum_{k=i+1}^n h_k(t)\right)\geq 0$. If $s,t\in A_{i}\cup\dots\cup A_1$ or $s,t\in (A_{i}\cup\dots\cup A_1)^c$ the previous product is 0. This shows that the functions $h_i$ and $\sum_{k=i+1}^n h_k$ are conomotone. 

Since $h_i$ and $\sum_{k=i+1}^n h_k$ are negative, by comonotonic additivity on $\F^-$ we have 
$$
I(f)=I(h_1+\sum_{i=2}^n h_i)=I(h_1)+I(h_2+\sum_{i=3}^n h_i)=\dots= \sum_{i=1}^n I(h_i). 
$$

Note that by Lemma \ref{lem:useful} and by definition of $v^-$ we have
$$
I(h_i)=(x_{i+1}-x_{i})\lambda(v^-(A_{i+1}\cup\dots\cup A_n)-1).
$$
Therefore 
\begin{align*}
I(f)&= \sum_{i=1}^{n-1} I(h_i) +I(h_n) \\
	&=  \lambda \left[\sum_{i=1}^{n-1}(x_{i+1}-x_{i})v^-(A_{i+1}\cup\dots\cup A_n)+\sum_{i=1}^{n-1}(x_{i+1}-x_{i})\right]+(x_{n+1}-x_n)(-\lambda)\\
	&= \lambda \left[\sum_{i=1}^{n-1}(x_{i+1}-x_{i})v^-(A_{i+1}\cup\dots\cup A_n)-x_n+x_1\right]+\lambda x_n \\
	&=\lambda \left[x_1+\sum_{i=1}^{n-1}(x_{i+1}-x_{i})v^-(A_{i+1}\cup\dots\cup A_n)\right]\\
	&=\lambda\int f dv^- \\
	&=\int  \lambda f dv^-
\end{align*}

Notice that every bounded function can be approximated  by a sequence of step functions as in Schmeidler \cite{Schmeidler86}. This shows that for all $f\in \F^-$, $I(f)=\int \lambda f dv^+$. Let now $f\in \F$ and notice that $f=f^+ +(-f^-)$ and moreover $supp(f^+)\cap supp(f^-)=\emptyset$. Hence 
$$
I(f)=I(f^++(-f^-))=I(f^+)+I(-f^-)=\int f^+dv^++\int \lambda(-f^-)d\bar{v}^-
$$
Let $\hat{v}^-$ be the conjugate capacity of $\bar{v}^-$, i.e. $\hat{v}^-(A)=1-\bar{v}^-(A^c)$ for all $A\in \A$. Then by Lemma \ref{lemma:basic} one has
$$
\int -\lambda f^-d\bar{v}^-=-\int \lambda f^-d\hat{v}^-
$$
Defining $v^-=\hat{v}^-$ concludes the ``$\Rightarrow$'' part of the proof.

$(ii)\Rightarrow (i)$. We prove (a). Suppose $f\geq g$. Then $f^+\geq g^+$ and $g^-\geq f^-$. It is well known that the Choquet integral is monotonic. Hence $I(f)=\int f^+dv^+ - \int\lambda f^-dv^-\geq g^+dv^+ - \int\lambda g^-dv^-=I(g)$ . \\
We prove (b). Let $f,g$ comonotonic and such that $f,g\geq 0$ (the case $f,g\leq 0$ is similar). Then $(f+g)^+=f+g=f^++g^+$ and $(f+g)^-=0=f^-=g^-$. Therefore 
\begin{multline*}
I(f+g)=\int(f+g)^+dv^+=\int f^+dv^+ + \int g^+dv^+= \\
\int f^+dv^+ - \int\lambda f^-dv^- + \int g^+dv^+ - \int\lambda g^-dv^-=I(f)+I(g).
\end{multline*}
We prove part (b) of $(ii)$. Let $f$, $g$ be of opposite sign (for instance $f\geq0$ and $g\leq 0$) and such that $supp(f)\cap supp(g)=\emptyset$. Notice that $(f+g)^+=f=f^+$,  $(f+g)^-=-g=g^-$, and $f^-=0$, $g^+=0$. Therefore 
\begin{multline*}
I(f+g)=\int(f+g)^+dv^+-\int\lambda(f+g)^-dv^-=\int f^+dv^+ - \int\lambda g^-dv^-=  \\
\int f^+dv^+ - \int\lambda f^-dv^- + \int g^+dv^+ - \int \lambda g^-dv^-=I(f)+I(g)
\end{multline*}
which complete the proof of part (b).
\end{proof}

\medskip
\medskip

\begin{proof}[\textbf{Proof of Theorem \ref{th:Sipos}}]
$(i) \Rightarrow (ii)$ Let $CPT$ be a \v Sipo\v s integral. Then $\lambda=1$ and $v^+=v^-$ and hence
$$
CPT(-f)=\int(-f)^+ dv-\int (-f)^- dv=\int f^- dv-\int f^+ dv=-CPT(f)
$$

$(ii) \Rightarrow (i)$  Note that $\lambda=1$ since $-\lambda=CPT(-1_S)=-CPT(1_S)=-1$. Let $A\in \A$ and consider $f=1_A$.  Then 
$$
CPT(-f)=0-\int 1_A dv^-=-v^-(A) \text{ and } -CPT(f)=-\int 1_A dv^+=-v^+(A)
$$
Therefore 
$$
CPT(-f)=-CPT(f)\Leftrightarrow v^-(A)=v^+(A).
$$
Since this must be true for all $A\in \A$, $v^-=v^+$ and the CPT functional is a \v Sipo\v s integral.
\end{proof}

\medskip
\medskip

\begin{proof}[\textbf{Proof of Theorem \ref{th:axiom_CPT}}]
$(ii)\Leftarrow (i)$ We only prove A.4. Take $\lambda>0$ of the CPT functional. Fix $f$ and $g$ s.t. $f\in \F^{+}$ and $g\in \F^{-}$ and s.t. $supp(f)\cap supp(g)=\emptyset$. Suppose $f\sim \alpha1_S$ and $g\sim \beta 1_S$. Note that $\alpha\geq 0$ and $\beta\leq 0$. Therefore
\begin{align*}
CPT(f)=CPT(\alpha1_S) &\Leftrightarrow CPT(f)=\alpha \\
CPT(g)=CPT(\beta1_S) &\Leftrightarrow CPT(g)=-\lambda\int \beta^-dv^-=-\lambda(-\beta)=\lambda\beta.
\end{align*}
Moreover since $f$ and $g$ have opposite signs and have disjoints supports we have
$$
CPT(f+g)=CPT(f)+CPT(g)=\alpha+\lambda\beta.
$$ 
Now, if $\alpha+\lambda\beta>0$, $CPT((\alpha+\lambda\beta)1_S)=\alpha+\lambda\beta$, and since CPT represents $\succsim$, $f+g\sim  (\alpha+\lambda \beta)1_S$. If  $\alpha+\lambda\beta<0$, $CPT\left(\frac{\alpha+\lambda \beta}{\lambda}1_S\right)=-\lambda\int \left(\frac{\alpha+\lambda \beta}{\lambda}\right)^-dv^-=-\lambda\frac{-\alpha-\lambda \beta}{\lambda}=\alpha+\lambda\beta$. Therefore $f+g\sim \frac{\alpha+\lambda \beta}{\lambda}1_S$.


$(i)\Rightarrow(ii)$ First, note that for all $f\in \F^+$, $f=f^+$ and for all $f\in \F^-$, $f=-f^-$. (One can prove that) By A.1 and A.2 for all  $f\in \F^+$ there exists a unique $\alpha_{f^+}\geq 0$ s.t. 
$$
f^+\sim \alpha_{f^+}1_S.
$$
Let $\lambda>0$ be the one of Axiom A.4. Then again by A.1 and A.2 for all  $f\in \F^-$ there exists a unique $\alpha_{-f^-}\leq 0$ s.t. 
$$
-f^-\sim \left(\frac{\alpha_{-f^-}}{\lambda}\right)1_S.
$$
Define $I:\F\rightarrow\R$ as 
$$
I(f)=I(f^+)+I(-f^-)
$$
where $I(f^+)=\alpha_{f^+}$ and $I(-f^-)=\alpha_{-f^-}$. Note that $f^+\sim I(f^+)1_S$ and $-f^-\sim \left(\frac{I(-f^-)}{\lambda}\right)1_S$. Moreover $I(1_S)=1$ by Monotonicity.

We will prove that $I$ satisfies the conditions of Theorem \ref{th:CPT} and it is therefore a CPT functional.

\begin{step}\label{step:CE}
Fix $f\in\F$, then  $I(f)\geq0$ implies $f\sim I(f)1_S$ and $I(f)<0$ implies $f\sim \frac{I(f)}{\lambda} 1_S$. 
\end{step}
\begin{proof}
Let $f\in\F$. 
\begin{itemize}
\item Case 1: $I(f)\geq0$. Note that $f=f^++(-f^-)$ and by definition $f^+\sim I(f^+)1_S$ and $-f^-\sim \frac{I(-f^-)}{\lambda}1_S$. Moreover $I(f^+)+\lambda\frac{I(-f^-)}{\lambda}=I(f)\geq0$, hence by A.4 and by the definition of $I(f)$
$$
f=f^++(-f^-)\sim \left(I(f^+)+\lambda\frac{I(-f^-)}{\lambda}\right)1_S=I(f)1_S.
$$
\item Case 2: $I(f)<0$. Then reasoning as before and applying A.4 we get
$$
f=f^++(-f^-)\sim \left(\frac{I(f^+)+\lambda\frac{I(-f^-)}{\lambda}}{\lambda} \right)1_S=\left(\frac{I(f^+)+I(-f^-)}{\lambda}\right)1_S=\frac{I(f)}{\lambda} 1_S.
$$
\end{itemize}
\end{proof}

\begin{step}\label{step:mon}
$I$ is monotone.
\end{step}
\begin{proof}
Let $f,g\in\F$ be such that $f\geq g$. Then $f^+\geq g^+$ and $-f^-\geq -g^-$. Then by Monotonicity $f^+\succsim g^+$ and  $-f^-\succsim -g^-$. Then  by Step \ref{step:CE} $I(f^+)1_S\sim f^+\succsim g^+\sim I(g^+)1_S$ and  $\frac{I(-f^-)}{\lambda}1_S\sim -f^-\succsim -g^-\sim \frac{I(-g^-)}{\lambda}1_S$.
 Monotonicity implies $I(f^+)\geq I(g^+)$ and $I(-f^-)\geq I(-g^-)$. Summing up we obtain $I(f)\geq I(g)$. 
\end{proof}

\begin{step}\label{step:como_add}
$I$ satisfies comonotonic additivity over $\F^+$ and $\F^-$.
\end{step}
\begin{proof}
We prove comonotonic additivity over $\F^-$, the proof for $\F^+$ can be done in a similar way. \\
Take $f,g\in\F^-$ s.t. $f$ and $g$ are comonotone. By Step $\ref{step:CE}$, $f\sim \frac{I(f)}{\lambda}1_S$ and $g\sim \frac{I(g)}{\lambda}1_S$. Since constant acts are comonotone with all other acts and $\frac{I(f)}{\lambda},\frac{I(g)}{\lambda}\leq 0$, by A.3 one gets $f+g\sim \frac{I(f)}{\lambda}1_S+g$ and $g+\frac{I(f)}{\lambda}1_S\sim \frac{I(g)}{\lambda}1_S+\frac{I(f)}{\lambda}1_S$. Since $f+g\in\F^-$, by Step \ref{step:CE} $f+g\sim \frac{I(f+g)}{\lambda}1_S$. Therefore $\frac{I(f+g)}{\lambda}1_S\sim \left(\frac{I(f)}{\lambda}+ \frac{I(g)}{\lambda}\right)1_S$, and Monotonicity implies $I(f+g)=I(f)+I(g)$.
\end{proof}

\begin{step}\label{step:disj_add}
For all $f\in \F^{+(-)}$ and $g\in \F^{-(+)}$ s.t. $supp(f)\cap supp(g)=\emptyset$, $I(f+g)=I(f)+I(g)$. 
\end{step}
\begin{proof}
Fix $f\in \F^{+}$ and $g\in \F^{-}$ s.t. $supp(f)\cap supp(g)=\emptyset$. Define $h=f+g$ and note that $h^+=f$ and $-h^-=g$. Therefore by definition of $I$,  $I(f+g)=I(h)=I(h^+)+I(-h^-)=I(f)+I(g)$.
%
\end{proof}

\begin{step}\label{step:represent}
$I$ represents $\succsim$ over $\F$ (i.e. $f\succsim g \Leftrightarrow I(f)\geq I(g)$).
\end{step}
\begin{proof}
Fix $f,g\in\F$. Then we have to consider 4 cases. 
\begin{itemize}
\item Case 1: $I(f),I(g)\geq0$. Using Step \ref{step:CE} and Monotonicity $I(f)1_S\sim f\succsim g\sim I(g)1_S \Leftrightarrow I(f)\geq I(g)$. 
\item Case 2: $I(f),I(g)\leq0$. Using Step \ref{step:CE} and Monotonicity $\frac{ I(f)}{\lambda}1_S\sim f\succsim g\sim \frac{ I(g)}{\lambda}1_S\Leftrightarrow I(f)\geq I(g)$, since $\lambda>0$.
\item Case 3: $I(f)\geq 0>I(g)$. Using Step \ref{step:CE} and Monotonicity $I(f)1_S\sim f\succsim g\sim \frac{ I(g)}{\lambda}1_S\Leftrightarrow I(f)\geq I(g)$. Note that in this case we cannot have $g\succsim f$.
\item Case 4: $I(g)\geq 0>I(f)$. This is the same as Case 3.
\end{itemize}
\end{proof}

Since $I(1_S)=1$, Steps \ref{step:mon}, \ref{step:como_add} and \ref{step:disj_add} prove that $I$  satisfies condition $(i)$ of Theorem \ref{th:CPT} and therefore $I$ is a CPT functional. Moreover Step \ref{step:represent} shows that $I$ represents $\succsim$. Therefore the proof is complete.
\end{proof}

\medskip
\medskip

\begin{proof}[\textbf{Proof of Theorem \ref{th:axiom_CPT_convex}}]
 $(i)\Rightarrow (ii)$ Note that A.3' and A.3'' imply A.3. Hence Theorem~\ref{th:axiom_CPT} applies and $I$ is represented by a CPT functional. It is left to show that $v^+$ and $v^-$ are convex. We only show convexity of $v^-$. Fix $A,B\in\A$ and note that $CPT(-1_A)=-\lambda v^-(A)=CPT(-v^-(A)1_S)$ and a similar statement holds for $B\in \A$. Therefore $-1_A\sim -v^-(A)1_S$ and $-1_B\sim -v^-(B)1_S$. Since $-1_B$ is comonotonic with  $-v^-(A)1_S$, by A.3'' $-v^-(A)1_S-1_B\succsim -1_A-1_B$.  Moreover since $-v^-(A)1_S$ is comonotonic with both $-1_B$ and $-v^-(B)1_S$ by A.3'' we get $-1_B-v^-(A)1_S\sim -v^-(B)1_S-v^-(A)1_S$. Therefore
$$
-v^-(B)1_S-v^-(A)1_S\sim -1_B-v^-(A)1_S\succsim -1_A-1_B
$$
Note that $-1_A-1_B=-1_{A\cup B}-1_{A\cap B}$ and since $1_{A\cup B}$ and $1_{A\cap B}$ are comonotonic, 
\begin{align*}
CPT(-1_{A\cup B}-1_{A\cap B})&=-\int \lambda (-1_{A\cup B}-1_{A\cap B})^-dv^-\\
&= -\lambda \left( \int1_{A\cup B} dv^-+  \int1_{A\cap B} dv^-\right) \\
&= -\lambda [v^-(A\cup B)+v^-(A\cap B)].
\end{align*}

Therefore $-\lambda [v^-(A)+v^-(B)]= CPT(-v^-(A)1_S-v^-(B)1_S)\geq CPT(-1_{A\cup B}-1_{A\cap B})=-\lambda [v^-(A\cup B)+v^-(A\cap B)]$ which implies $v^-(A)+v^-(B)\leq v^-(A\cup B)+v^-(A\cap B)$, i.e. $v^-$ is convex.

$(ii)\Rightarrow (i)$ Left to the reader.
\end{proof}

\medskip
\medskip

\begin{proof}[\textbf{Proof of Theorem \ref{th:axiom_Sipos}}]
$(i)\Rightarrow (ii)$ Since $\succsim$ satisfies A.1, A.2, A.3 and A.4, it can be represented by a CPT functional $I$ by Theorem \ref{th:axiom_CPT}. Hence for all $f\in \F$, $f\sim I(f)1_S$ and $-f\sim I(-f)1_S$. Notice that  by A.5 one has also $-f\sim -I(f)1_S$ and hence A.2 implies $I(-f)=-I(f)$. By Theorem \ref{th:Sipos}, $I$ is a \v Sipo\v s integral. 

$(ii)\Rightarrow (i)$ Left to the reader.
\end{proof}


\normalsize


\begin{thebibliography}{99}                                                                                               %

\bibitem{Allais} Allais, M. Le comportement de l'homme rationnel devant le risque: critique des postulats et axiomes de l'école américaine. \textit{Econometrica}, pp.503-546, 1953.

\bibitem{AA} Anscombe. F.J. and R. Aumann. A definition of
subjective probability. \textit{Annals of Mathematical Statistics}, 34, 199-205,  1963.

\bibitem{Chato94} Chateauneuf, A. Modeling attitudes towards uncertainty and risk through the use of choquet integral. \textit{Annals of Operations Research}, 52, 1–20, 1994.

\bibitem{ChatoWakker99} Chateauneuf, A. and Wakker, P. An axiomatization of cumulative prospect theory for decision under risk. \textit{Journal of Risk and Uncertainty}, 18(2), pp.137-145, 1999.

\bibitem{ChewWakker96} Chew SH. and Wakker, P. The comonotonic sure-thing principle. \textit{Journal of Risk and Uncertainty}, 12(1), pp.5-27, 1996.

\bibitem{deFinetti} De Finetti, B. Sul significato soggettivo della probabilit\'a. \textit{Fundamenta Mathematicae}, 17(1), pp.298-329, 1931.

\bibitem{Denneberg} Denneberg D. \textit{Non-Additive Measure and Integral}. Kluwer, Dodrecht, 1994.

\bibitem{10-2Wakker} Diecidue, E. and Wakker, PP. On the intuition of rank-dependent utility. \textit{Journal of Risk and Uncertainty}, 23(3), pp.281-298, 2001.


\bibitem{Ellsberg} D. Ellsberg. Risk, ambiguity, and the Savage axioms. \textit{The Quarterly Journal of Economics}, 643-669, 1961.

\bibitem{Gal} Gal, D. and Rucker, D.D. The loss of loss aversion: Will it loom larger than its gain? \textit{Journal of Consumer Psychology}, 28(3), pp.497-516 , 2018.

\bibitem{Gat} Gächter, S., Johnson, E.J. and Herrmann, A. Individual-level loss aversion in riskless and risky choices. \textit{Theory and Decision}, pp.1-26, 2021.

\bibitem{KSW} Kothiyal, A., Spinu, V. and Wakker, PP. An experimental test of prospect theory for predicting choice under ambiguity. \textit{Journal of Risk and Uncertainty}, 48(1), pp.1-17, 2014.

\bibitem{Savage} Savage, L.J. \textit{The foundations of statistics}. Courier Corporation,  1972.

\bibitem{Schmeidler86} Schmeidler, D. Integral representation without additivity. \textit{Proceedings of the American Mathematical Society}, 97:255-261, 1986.

\bibitem{Schmeidler89} Schmeidler, D. Subjective probability and expected utility theory without additivity. \textit{Econometrica}, 57, 571-587, 1989.

\bibitem{SchmidtZank09} Schmidt, U. and Zank, H. A simple model of cumulative prospect theory. \textit{Journal of Mathematical Economics}, 45(3-4), pp.308-319, 2009.

\bibitem{SchmidtZank12} Schmidt, U. and Zank, H., 2012. A genuine foundation for prospect theory. \textit{Journal of Risk and Uncertainty}, 45(2), pp.97-113.

\bibitem{Sipos} \v Sipo\v s, J. Integral with respect to a pre-measure. \textit{Mathematica Slovaca}, 29(2), pp.141-155, 1979.

\bibitem{Starmer89} Starmer, C. and Sugden, R. Violations of the independence axion in common ratio problems: An experimental test of some competing hypotheses. \textit{Annals of Operations Research}, 19(1), pp.79-102, 1989.



\bibitem{TW} Trautmann, S. and Wakker, P. Making the Anscombe-Aumann approach to ambiguity suitable for descriptive applications. \textit{Journal of Risk and Uncertainty}, 56(1), pp.83-116, 2018.


\bibitem{CPT} Tversky, A. and Kahneman, D. Advances in prospect theory: Cumulative representation of uncertainty. \textit{Journal of Risk and Uncertainty}, 5(4), pp.297-323,  1992.

 \bibitem{Wakker90} P. Wakker. Characterizing optimism and pessimism directly through comonotonicity. \textit{Journal of Economic Theory}, 52:453-463, 1990.

 \bibitem{WakkerPT} Wakker, PP. \textit{Prospect theory: For risk and ambiguity}. Cambridge university press, 2010.

\bibitem{WT} Wakker, P. and Tversky, A. An axiomatization of cumulative prospect theory. \textit{Journal of Risk and Uncertainty}, 7(2), pp.147-175, 1993.

\bibitem{WZ} Wakker, PP. and Zank, H. A simple preference foundation of cumulative prospect theory with power utility. \textit{European Economic Review,} 46(7), pp.1253-1271, 2002.

\bibitem{Yaari} Yaari, ME. The dual theory of choice under risk. \textit{Econometrica}, pp.95-115, 1987.

\bibitem{vNM} Von Neumann, J. and Morgenstern, O. \textit{Theory of games and economic behavior}. Princeton university press, 1944.

\bibitem{Zank} Zank, H. Cumulative prospect theory for parametric and multiattribute utilities. \textit{Mathematics of Operations Research}, 26(1), pp.67-81, 2001.

\end{thebibliography}
\end{document}